\documentclass[12pt,authoryear,review]{elsarticle}

\makeatletter

  \def\ps@pprintTitle{%
 \let\@oddhead\@empty
 \let\@evenhead\@empty
 \def\@oddfoot{\centerline{\thepage}}%
 \let\@evenfoot\@oddfoot}
\makeatother
\usepackage{amsmath,amssymb,amsfonts}
\usepackage{graphicx}
\usepackage{subfigure}
\usepackage{color}
\usepackage{bigstrut}
\usepackage{multirow}
\usepackage{pdflscape}
\usepackage{ulem}
\usepackage{dsfont}
\usepackage[dvipsnames]{xcolor}

\usepackage{hhline}
\usepackage{epstopdf}

\usepackage{mathrsfs}

\journal{TBA}

\addtolength{\voffset}{-2.5cm}
\addtolength{\hoffset}{-1.5cm}
\addtolength{\textwidth}{3cm}
\addtolength{\textheight}{4cm}

\linespread{1.1}
\setlength{\parskip}{14pt}

\newtheorem{theorem}{Theorem}

\newtheorem{proposition}[theorem]{Proposition}

\newdefinition{remark}{Remark}
\newdefinition{assumption}{Assumption}
\newproof{proof}{Proof}
\newproof{pot}{Proof of Theorem \ref{thm2}}

\newcommand{\binfty}{{\boldsymbol \infty}}

\newcommand{\EE}{{\mathbb E}}

\newcommand{\Acal}{{\mathcal A}}
\newcommand{\bmu}{{\boldsymbol\mu}}
\newcommand{\bkappa}{{\boldsymbol\kappa}}
\newcommand{\bP}{{\boldsymbol P}}
\newcommand{\bg}{{\boldsymbol g}}
\newcommand{\bo}{{\boldsymbol o}}
\newcommand{\bp}{{\boldsymbol p}}
\newcommand{\bSigma}{{\boldsymbol\Sigma}}
\newcommand{\bW}{{\boldsymbol W}}
\newcommand{\btheta}{{\boldsymbol\theta}}
\newcommand{\bsigma}{{\boldsymbol\sigma}}
\newcommand{\balpha}{{\boldsymbol\alpha}}
\newcommand{\bbeta}{{\boldsymbol\beta}}
\newcommand{\bzeta}{{\boldsymbol\zeta}}
\newcommand{\bb}{{\boldsymbol b}}
\newcommand{\ba}{{\boldsymbol a}}
\newcommand{\bA}{{\boldsymbol A}}
\newcommand{\bB}{{\boldsymbol B}}
\newcommand{\bC}{{\boldsymbol C}}
\newcommand{\bwC}{\widetilde{\boldsymbol C}}
\newcommand{\bD}{{\boldsymbol D}}
\newcommand{\bmD}{{\boldsymbol{\mathfrak{D}}}}

\newcommand{\bmd}{{\boldsymbol{\mathfrak{f}}}}
\newcommand{\bE}{{\boldsymbol E}}
\newcommand{\bF}{{\boldsymbol F}}
\newcommand{\bG}{{\boldsymbol G}}
\newcommand{\bGamma}{\boldsymbol \Gamma}
\newcommand{\bH}{{\boldsymbol H}}
\newcommand{\bI} {\boldsymbol I}

\newcommand{\bmL}{{\boldsymbol {\mathcal L}}}
\newcommand{\bM}{{\boldsymbol {M}}}
\newcommand{\bmM}{{\boldsymbol {\mathcal M}}}
\newcommand{\bmN}{{\boldsymbol {\mathcal N}}}
\newcommand{\bzero}{{\boldsymbol{0}}}
\newcommand{\bQ}{{\boldsymbol Q}}
\newcommand{\bmQ}{\boldsymbol {\mathcal Q}}
\newcommand{\bS}{{\boldsymbol{S}}}
\newcommand{\bmS}{{\boldsymbol {\mathcal S}}}
\newcommand{\bX}{{\boldsymbol {\mathcal{X}}}}
\newcommand{\bnu}{{\boldsymbol\nu}}

\newcommand{\bw}{\boldsymbol w}

\newcommand{\bZ}{{\boldsymbol Z}}
\newcommand{\bmZ}{{\boldsymbol{\mathfrak{Z}}}}
\newcommand{\bz}{{\boldsymbol z}}
\newcommand{\bbq}{{\boldsymbol q}}
\newcommand{\bbs}{{\boldsymbol{s}}}
\newcommand{\be}{{\boldsymbol{e}}}

\newcommand{\toe}[1]{\boldsymbol{:e}^{#1}\boldsymbol{:}}


\newsavebox\CBox
\newcommand\hcancel[2][0.5pt]{%
  \ifmmode\sbox\CBox{$#2$}\else\sbox\CBox{#2}\fi%
  \makebox[0pt][l]{\usebox\CBox}%
  \rule[0.5\ht\CBox-#1/2]{\wd\CBox}{#1}}

%
%

\begin{document}

\begin{frontmatter}

\title {\textbf{Trading Co-Integrated Assets with Price Impact}\tnoteref{t1} \\[0.5em]
\textit{Mathematical Finance, Forthcoming}
}
\tnotetext[t1]{ SJ would like to thank NSERC and GRI for partially funding this work. We are grateful to two anonymous referees and the editor (J. Detemple) for valuable comments that improved this paper.  }

\author[author1]{\'Alvaro Cartea}
\ead{alvaro.cartea@maths.ox.ac.uk}
\author[author2]{Luhui Gan}
\ead{luke.gan@mail.utoronto.ca}
\author[author2]{Sebastian Jaimungal}
\ead{sebastian.jaimungal@utoronto.ca}
\address[author1] {Mathematical Institute, University of Oxford, Oxford, UK\\  Oxford-Man Institute of Quantitative Finance, Oxford, UK}
\address[author2] {Department of Statistical Sciences, University of Toronto, Toronto, Canada}

\begin{abstract}
Executing a basket of co-integrated assets is an important task facing investors.  Here, we show how to do this accounting for the informational advantage gained from assets within and outside the basket, as well as for the permanent price impact of market orders (MOs) from all market participants, and the temporary impact that the agent's MOs have on prices. The execution problem is posed as an optimal stochastic control problem and we demonstrate that, under some mild conditions, the value function admits a closed-form solution, and prove a verification theorem. Furthermore,
we use data of five stocks traded in the Nasdaq exchange to estimate the model parameters and use simulations to illustrate the performance of the strategy. As an example,  the agent liquidates a portfolio consisting of shares in INTC\footnote{INTC: Intel Corporation.} and SMH.\footnote{SMH: Market Vectors Semiconductor ETF.} We show that including the  information provided by three additional assets (FARO, NTAP, ORCL)\footnote{FARO: FARO Technologies. NTAP: NetApp. ORCL: Oracle Corporation.} considerably improves the strategy's performance;  for the portfolio we execute, it outperforms the multi-asset version of Almgren-Chriss  by approximately $4$ to $ 4.5 $ basis points.
\end{abstract}

\begin{keyword}
optimal execution, price impact, co-integration, cross price impact, co-movements, algorithmic trading
\end{keyword}

\end{frontmatter}

\section{Introduction}

How to optimally execute a large position in an individual stock has been a topic of intense academic and industry research during the last few years. In contrast,  there is scant work on the joint execution of large positions in multiple assets.  One of the  early papers on optimal execution is by  \cite{almgren2001optimal} who consider a discrete-time model where the strategy employs market orders  (MOs) only. Extensions of their work,  where the agent employs MOs and/or limit orders, include \cite{almgren2012optimal}, \cite{kharroubi2010optimal}, \cite{Gueant2012}, \cite{forsyth2012optimal}, \cite{jaimungal2013optimal}, \cite{guilbaud2013optimal},  and \cite{cartea2014optimal}. In the extant literature, if the agent liquidates a portfolio of different assets, these are considered to be correlated, but do not include co-integration, nor do they include the market impact of the order flow from other market participants.  This paper fills this gap.   We  show how an agent  executes a basket of assets employing a framework that models  the price impact of  order flow,  and  employs the information provided by the co-integration factors that drive the joint dynamics of prices -- which may include other assets she is not trading in.

In our framework, the agent's  MOs have both temporary and permanent price impact. Temporary impact results from the agent's MOs walking the limit order book (LOB), and permanent impact results from   one-sided trading pressure exerted on prices. In contrast to most of the literature (\cite{cartea2015incorporating} and \cite{cartea2014closed} being two notable exceptions), here, MOs of other market participants are treated in the same way as the agent's order: market buy orders exert upward  pressure on prices, and market sell orders downward pressure on prices.  Furthermore, order flow in one asset may impact  the   prices of co-integrated assets.  This cross-effect  is partly caused by trading algorithms that take positions based on the co-movements of assets. Such strategies induce  co-movement in order flow and  liquidity displayed in the LBOs of the co-integrated assets.

In our setup,  permanent impact of order flow is linear in the speeds of trading of all market participants (including the agent), and temporary impact   is also linear in the agent's speed of trading. We focus on the execution problem where the agent liquidates shares in $m$ assets and employs information from  a collection of $n\geq m$ co-integrated assets. The agent maximizes terminal wealth and penalizes deviations from an inventory-target schedule. This scenario appears in many applications in practice. For example, agency traders are often faced with liquidating a basket of Eurodollar\footnote{Recall that Eurodollar futures are futures contracts on time deposits  denominated in USD, but held in a non-US country.} futures of consecutive maturities. These contracts are highly co-integrated, and not simply correlated, see the discussion in \cite{almgren2013executionFixed}.

Our setup is related to that of \cite{JOFI:JOFI12080} in which the authors optimize the discounted, and penalized, future expected excess returns in a discrete-time, infinite-time horizon problem. In their model, prices contain an unpredictable martingale component, and an independent stationary predictable component. The penalty is imposed to account for a version of temporary price impact similar to walking the LOB, and they include a permanent price impact which reverts to zero if there are no trades. \cite{AlphaPredictors} numerically study a continuous-time, finite horizon, version of \cite{JOFI:JOFI12080}, and account for crossing the spread or posting limit orders. Our approach differs in five main aspects: (i)   our setup is in continuous-time, (ii) the execution horizon is finite, (iii) the agent solves an execution problem where prices are co-integrated (rather than having independent predictable components), (iv) the agent's MOs have permanent and temporary impact, and (v) the MOs of other market participants also have permanent price impact. Moreover, we provide analytic characterizations of the solution to the execution problem.

To illustrate the performance of the strategy we calibrate model parameters to five stocks  (INTC, SMH, FARO, NTAP, and ORCL) traded on the Nasdaq exchange  and run simulations for variations of the strategy including different levels of urgency and inventory-target schedules, including/excluding a speculative component which allows repurchases of shares. As benchmark we use the multi-asset version of the Almgren-Chriss (AC) strategy where  the agent models the correlation between the assets in the basket, but  does not model co-integration or employ additional information from other assets.   The agent liquidates a basket consisting of 4,600 shares of INTC and 900 shares of SMH  which corresponds to 1\% and 4\% of traded volume over the one hour in which execution occurs.

Additional information from other co-integrated stocks considerably boosts the performance of the strategy. For example,  if the level of urgency required by the agent to liquidate the portfolio is high (resp. low)  the strategy outperforms  AC by 4 (resp. 4.5) basis points.  This improvement over AC is due to the quality of the information provided by the co-integrated assets, and  due to a speculative component of the strategy which allows the agent to repurchase shares during the liquidation horizon to take advantage of price signals. If the agent is not allowed to speculate, i.e., cannot repurchase shares, the relative savings compared to AC, depending on the level of urgency, are between 2.5 to 3.5 basis points.

Finally, we also illustrate how the strategy performs when the agent  has access to only one trading day of data, thus parameter estimates are incorrect. We show that the performance of the strategy is broadly the same as that resulting from that when the agent has enough data to obtain correct parameter estimates.

Our model is also related to the literature on pairs trading in that the agent's strategy benefits from co-integration in asset prices.  For example, \cite{mudchanatongsuk2008optimal} model the log-relationship between a pair of stock prices as an Ornstein-Uhlenbeck process and use this to formulate a trading strategy.  More recently, \cite{leung2015optimal} study the optimal timing strategies for trading a mean-reverting price spread, see also  \cite{lei2015costly}, and  \cite{ngo2014optimal}. Finally, the work of \cite{yan2012dynamic} develops an optimal  portfolio strategy for a pair of co-integrated assets. This is generalized to multiple co-integrated assets in \cite{CarJaiCointegration}, and \cite{lintilhac2016model}.

The remainder  of this paper is structured as follows. Section    \ref{sec:Model} presents the model for the co-integrated prices and poses the liquidation problem solved by the agent. Section \ref{sec:solving_the_control_problem} presents the dynamic programming equation and shows the optimal liquidation speeds. Section \ref{sec:Numerical-Example} discusses the Nasdaq exchange data employed to estimate  the co-integrating factor of prices, and illustrates the performance of the strategy under different assumptions. Section \ref{sec: concs} concludes and  proofs are collected in the Appendix.

\section{Model}\label{sec:Model}

The investor must liquidate a portfolio of assets and has a time limit to complete the execution. One simple strategy is  to view each stock in the portfolio  independently  and employ a liquidation algorithm designed for an individual stock, see e.g. \cite{almgren2001optimal}, \cite{bayraktar2012liquidation}, \cite{TheBook}. Treating each stock independently  is optimal if the assets in the portfolio do not exhibit any co-movements or dependence.

Here we focus on the general   case where a collection of traded assets co-move. Modelling the joint dynamics provides the investor with better information to undertake the liquidation strategy. Ideally, the information employed in the execution strategy is not limited to the constituents of the portfolio to be liquidated, it includes other assets that improve the quality of the information employed in the algorithm. See for example, \cite{AlSebDam13} who show how to learn from  a collection of assets to trade in a subset of the assets.

The portfolio consists of $m$ assets which are a subset of  the $n$-dimensional  vector $\bP=(\bP_t)_{0\le t\le T}$ of  midprices that the investor employs in the trading algorithm. The midprices are determined by a co-integration factor and the impact of the order flow from all market participants including the investor's orders. Specifically we assume that the midprices satisfy the multivariate stochastic differential equation (SDE)
\begin{equation} \label{eqn: controlled midprices}
d\bP_t = d\bS_t +  \bg(\bo_t)\,dt \,,
\end{equation}
where  $ \bS $ denotes the co-integration component of midprices and satisfies
\begin{equation}\label{eqn: cointegrated midprices}
	d\bS_{t} = \bkappa\,\left(\btheta-\bS_{t}\right)\,dt + \bsigma^{\intercal}\,d\bW_{t}\,.
\end{equation}
Here $\bkappa$ is a $n\times n$ matrix, $\btheta$ is an $n$-dimensional vector,  and $\bsigma^{\intercal}$ is the Cholesky decomposition of the
asset prices' correlation matrix $\bSigma$ (i.e. $\bSigma=\bsigma^{\intercal}\bsigma$), where the operation $^\intercal$ denotes the transpose operator. As usual we work on the  filtered probability space $(\Omega,\mathcal{F},\mathbb{P},\mathbb{F}=(\mathcal{F}_{t})_{0\leq t\leq T})$, and $\bW=(\bW_{t})_{0\le t\le T}$ is an $n$-dimensional Brownian motion with natural filtration $\mathcal{F}_{t}$.

Moreover $\bg(\bo_t)$ represents the effect of order flow $\bo=(\bo_t)_{0\le t\le T}$, with $\bo_t\in\mathbb R^n$,  from all market participants (including the investor's trades) on  midprices, and $\bg:\mathbb R^n \to \mathbb R^n$ is a permanent price impact function. Below we give a more detailed account of the effect of order flow on the midprice dynamics -- for more details see \cite{cartea2015incorporating} who discuss the effect of market order flow on asset prices.

The investor wishes to liquidate the portfolio of $m$ assets over a time  window $[0,T]$ -- the setup for the acquisition problem is similar, so we do not discuss it here. Her initial inventory in each asset is given by the vector $\bQ_{0}\in\mathbb{R}^{m}$ and she must choose the speed at which she liquidates each one of the assets using MOs only.

We denote by $\bnu = (\bnu_t)_{0\leq t \leq T}$ the vector of liquidation speeds, and by $\bQ^\bnu = (\bQ^\bnu_t)_{0\leq t \leq T}$ the vector of (controlled) inventory holding in each asset. The inventory is  affected by how fast she trades  and satisfies
\begin{equation}
d\bQ_{t}^{\bnu}=-\bnu_{t}\,dt\,.\label{eq:inventory_process}
\end{equation}

In our model all MOs have price impact.  We assume that price impact is linear in the speed of trading (see \cite{cartea2015incorporating} for extensive data analysis illustrating this fact) and  treat the order flow of the investor and other market participants  symmetrically.  In particular, we denote other agents' aggregated net trading speed  by $\bmu=(\bmu_t)_{0 \leq t \leq T}\,$, which we assume is Markov\footnote{ We can easily include other factors that drive order flow, as long as the joint process, consisting of the driving factors and order flow itself, is Markov.}  with infinitesimal generator $\mathcal{L}^\bmu$,  and independent\footnote{This independence assumption can also be relaxed.} of the Brownian motion $\bW$. Thus, the price impact of order flow is
\begin{equation}\label{eqn: linear price impact}
\bg(\bo_t) = -\bb\,  \bX^\intercal \, \bnu_{t}\, + \overline{\bb} \,\bmu_t \,,
\end{equation}
where   $\bb$ is the permanent impact  $n\times n$ symmetric matrix and $ \overline{\bb} $ is the permanent impact $ n \times n $ matrix from other agents’ trading activity. $\bX $ is a $m\times n$ matrix with $\bX_{ij}={\mathds 1}_{\{i=j\}}$  and maps the first $m$ elements of an $n$-dimensional vector to an $m$-dimensional vector.
Although   permanent impact from order flow is treated symmetrically, here we separate the agent's impact from that of other participants should we want to focus on either one when analyzing  the strategy.

Therefore,  after inserting \eqref{eqn: linear price impact} in \eqref{eqn: controlled midprices},  the midprice can be expressed as
\begin{equation}\label{eq:price_dynamics}	
\bP_t^\bnu= \bS_t + \bb\,\bX^\intercal\,(\bQ^\nu_t - \bQ_0) + \overline{\bb} \, \bmM_t\,,
\end{equation}
where $ \bmM_t = \int_{0}^{t} \bmu_u du $ and we use the notation $ \bP^\bnu $ to stress that midprices are affected by the investor’s (controlled) speed of trading.

Our model for price dynamics is related to that used in optimal `pairs trading' where a speculative strategy is designed to profit from the movement of a collection of co-integrated assets, see  \cite{yan2012dynamic}, \cite{leung2015optimal}, and  \cite{CarJaiCointegration}. Our work is different   in that the agent's objective is to execute a basket of co-integrated assets, and more importantly,   order flow from all market participants, including the agent's own trades, is explicitly modelled in the price dynamics, and (as discussed below, we account for temporary price impact).

In addition to permanent price impact, the  investor receives  worse than  quoted  midprices because her MOs walk the LOBs. This price impact is temporary and only affects the prices the investor  receives when selling shares. The execution prices are given by
\begin{equation}\label{eq:execution_price-1}
	\tilde{\bP}_{t}^{\bnu}=\bX\,\bP_t^\bnu-\ba\,\bnu_{t}\,.
\end{equation}
$\ba$ is an $m\times m$ positive definite matrix, so the temporary impact is linear in the speed of trading. Without loss of generality, we  assume that the first $m$ coordinates of $\bP_t^\bnu$ correspond to the assets the investor trades.

In this setup, the LOBs recover immediately after the execution of the MOs -- see  \cite{Almgren03}, \cite{AlfonsiFruthSchiedQF2010}, \cite{KharroubiPhamSIAM2010}, \cite{SchiedMAFI2012}, \cite{SchiedAMF13},  \cite{GueantLeHalleMF13} for further discussions and generalizations.

Finally, the investor's cash from  liquidating shares in the $m$ assets is denoted  by $X^{\bnu}=(X^{\bnu})_{0\le t \le T}$ and satisfies the SDE
\begin{equation}\label{eq:wealth_dynamics}
	dX_{t}^{\bnu}=(\bX \,\bP_t^\bnu-\ba\,\bnu_{t})^{\intercal}\,\bnu_{t}\,dt\,.
\end{equation}

\subsection{Performance criteria and value function}
The investor aims at liquidating the portfolio by the terminal date $T$  and   maximizes  expected terminal wealth
while penalizing deviations from a deterministic target inventory  $\bmQ_t:\mathbb R_+\to \mathbb R^m$  satisfying $\bmQ_0 = \bQ_0$ and $\bmQ_T = \bzero$.

 Her performance criteria is
\begin{equation}\label{eqn: performance criteria}
	\begin{split}
 	H^\bnu(t,x,\bp, \bbq, \bmu) = &\mathbb{E}_{t,x,\bp,\bbq,\bmu}\bigg[\,X_{T}^{\bnu}+	(\bP_T^\bnu)^{\intercal}\,\bX ^{\intercal}\,\bQ_{T}^{\bnu}-(\bQ_{T}^{\bnu})^{\intercal}\,\balpha\, \bQ_{T}^{\bnu}  \\
	& \qquad \qquad \qquad -\phi\,\int_{t}^{T}(\bQ_{u}^{\bnu} - \bmQ_u)^{\intercal}\,\tilde{\bSigma}\,(\bQ_{u}^{\bnu}-\bmQ_u)\,du\,\bigg]\,,
	\end{split}
\end{equation}
where the expectation operator $\mathbb E_{t,x,\bp,\bbq,\bmu}[\,\cdot\,]$ represents expectation conditioned on (with a slight abuse of notation) $X_{t}^\bnu=x$, $\bP_{t}^\bnu= \bp$, $\bQ_t^\bnu=\bbq$, and $\bmu_t=\bmu$, and $\tilde{\bSigma}$ is an  $m\times m$ sub-matrix of   the correlation matrix $\bSigma$  corresponding to the $m$ assets that are being traded. Her value function is
\begin{equation}\label{eqn:value function}
H(t,x,\bp, \bbq, \bmu)=\sup_{\bnu\in \Acal}H^\bnu(t,x,\bp, \bbq, \bmu)\,,
\end{equation}
where $\Acal $ is the set of admissible  strategies consisting of  $\mathcal F$-predictable processes such that $\int^T_0\, |\nu_u^i|\,du< +\infty$, $\mathbb P-$a.s., for each asset $i$ the investor is liquidating. The liquidation speeds are not restricted to remain positive -- we return to this point in Section \ref{sec:Numerical-Example} when we analyze the empirical performance of the trading strategy.

 The first term on the right-hand side of the performance criteria \eqref{eqn: performance criteria} is  the terminal cash. The second term represents the cash obtained from liquidating  all remaining shares at the end of the trading window at the price $\bP$. The third term is the market impact costs from liquidating final inventory  which are encoded in the positive definite matrix $\balpha>0$.

Furthermore, the term in the second line of \eqref{eqn: performance criteria}  represents a running inventory-target penalty where $\phi\geq 0$ is a penalty parameter.  This inventory penalty does not affect the investor's revenues, but affects the optimal liquidation rates. When the value of the inventory penalty parameter $\phi$ is high, the strategy is forced to track closely the target $\bmQ_t$. This is similar in spirit to \cite{cartea2014closed} who develop a trading strategy to target VWAP, i.e.,  volume weighted average price, and similar to \cite{bank2015hedging} who study how to target general positions (without any price dynamics).

For example, when   $\bmQ_t=\bzero$ over the execution window,  $\phi$ may be interpreted as an urgency parameter. High values of $\phi$ correspond to the trader wishing to rid herself of more inventory early on. This particular target is justified in a setting where the investor considers model uncertainty -- i.e., she is ambiguity averse. \cite{CDJ} show  that including a running penalty that curbs the strategy to draw down inventory holdings to zero is equivalent to the agent considering alternative models with stochastic drifts. In that setting, the higher the value of $\phi$, the less confident the agent is about the trend of the midprice, so the quicker the strategy executes the shares.

\section{Optimal Portfolio Liquidation}\label{sec:solving_the_control_problem}

In this section we  derive the optimal liquidation rates.
Our first step is to rewrite the control problem using the fundamental price $ \bS $ as a state variable.
Using (\ref{eq:wealth_dynamics}) and integration by parts, the investor's wealth $ X_T^\bnu $ can be written as
\begin{eqnarray} \label{eq:terminal wealth restated}
X_T^\bnu			 & = & \int_0^T (\bX\,\bS_t - \ba\,\bnu_t)^\intercal\bnu_t\, dt - \tfrac{1}{2} (\bQ_T^\bnu - \bQ_0)^\intercal \bX\,\bb\,\bX^\intercal (\bQ_T^\bnu - \bQ_0) \nonumber \\
			 &  & -\bmM_T^\intercal\,\overline{\bb}^\intercal\,\bX^\intercal\,\bQ_T^\bnu + \int_0^T \bmu_t^\intercal \,\overline{\bb}^\intercal \,\bX^\intercal \,\bQ_t^\bnu \, dt\,.
\end{eqnarray}
From (\ref{eq:price_dynamics}), we have
\begin{eqnarray} \label{eq:book value terminal inventory}
	(\bP_T^\bnu)^\intercal\,\bX^\intercal\,\bQ_T^\bnu & = & \bS_T^\intercal \,\bX^\intercal \,\bQ_T^\bnu + (\bQ_T^\bnu - \bQ_0)^\intercal \,\bX\,\bb\,\bX^\intercal\,\bQ_T^\bnu + \bmM_T^\intercal\,\overline{\bb}^\intercal\,\bX^\intercal\,\bQ_T^\bnu\,,
\end{eqnarray}
and using (\ref{eq:terminal wealth restated}) and (\ref{eq:book value terminal inventory}),
the performance criteria can be written as
\begin{equation}
\begin{split}
	H^\bnu(t,x,\bbs, \bbq, \bmu) = & \mathbb{E}_{t,x,\bbs,\bbq,\bmu}\bigg[ \int_0^T (\bX\,\bS_t - \ba\,\bnu_t)^\intercal\bnu_t\, dt 
	+ \bS_T^\intercal \,\bX^\intercal \,\bQ_T^\bnu + \int_0^T \bmu_t^\intercal \,\overline{\bb}^\intercal\, \bX^\intercal\, \bQ_t^\bnu \, dt
	  \\
	& \phantom{\mathbb{E}_{t,x,\bp,\bbq,\bmu}\bigg[} +  (\bQ_T^\bnu)^\intercal \,\left(\frac{1}{2} \,\bX\,\bb\,\bX^\intercal - \balpha\right) \bQ_T^\bnu - \tfrac{1}{2}\, (\bQ_0)^\intercal \,\bX\,\bb\,\bX^\intercal\, \bQ_0   \\
	& \phantom{\mathbb{E}_{t,x,\bp,\bbq,\bmu}\bigg[}  -\phi\,\int_{t}^{T}(\bQ_{u}^{\bnu} - \bmQ_u)^{\intercal}\,\tilde{\bSigma}\,(\bQ_{u}^{\bnu}-\bmQ_u)\,du\,\bigg]\,.
\end{split}
\end{equation}

We further simplify the problem by introducing the transformed processes $Y^\bnu=\{Y_t^\bnu\}_{t\ge0}$ and $\bZ=\{\bZ_t\}_{t\ge0}$  through the following equalities
\begin{align}
	Y_{t}^{\bnu} &= \int_0^t (\bX\,\bS_u - \ba\,\bnu_u)^\intercal\bnu_u\, du   + \btheta^{\intercal}\,\bX^{\intercal}\,(\bQ_{t}^{\bnu}-\bQ_{0})\, + \int_0^t \bmu_u^\intercal\, \overline{\bb}^\intercal \,\bX^\intercal\, \bQ_u^\bnu \, du ,\label{eq:trans_Y} \\
	\bZ_{t} &= \bS_{t} - \btheta \,,\label{eq:trans_cointegration_factor}
\end{align}
in which case $\bZ $ and $Y^\bnu$ satisfy the SDEs
\begin{eqnarray}
	d\bZ_{t}  &= &  -\bkappa \,\bZ_t dt + \bsigma^{\intercal}\,d\bW_{t}\,,
	\label{eq:trans_price_dynamic} \\
	dY_{t}^{\bnu} & = & \Big\{ \left(\bX \,\bZ_{t} - \ba\,\bnu_{t}\right)^{\intercal}\,\bnu_{t}\,
	+ \bmu_t^\intercal\, \overline{\bb}^\intercal \bX^\intercal \bQ_t^\bnu \, \Big\} dt\,,
\end{eqnarray}
and the control problem, in the new variables, becomes
\begin{equation}
\begin{split}
&	H(t,y,\bz,\bbq,\bmu) \\
& \quad = \sup_{\bnu\in\mathcal{U}}\mathbb{E}_{t,y, \bz,\bbq,\bmu}\Bigg[\,Y_{T}^{\bnu} +
	\bZ_T^\intercal \,\bX^{\intercal}\,\bQ_{T}^{\bnu} + \btheta^{\intercal}\,\bX^{\intercal}\,\bQ_{0} + (\bQ_{T}^{\bnu})^{\intercal}\, (\tfrac{1}{2} \bX\bb\bX^\intercal - \balpha)\, \bQ_{T}^{\bnu}     \\
&  \phantom{\quad = \sup_{\bnu\in\mathcal{U}}\mathbb{E}_{t,y, \bz,\bbq,\bmu}\Bigg[\,} \; - \tfrac{1}{2} (\bQ_0)^\intercal \bX\bb\bX^\intercal \bQ_0 -\phi\int_{t}^{T}(\bQ_{u}^{\bnu} - \bmQ_u)^{\intercal}\,\tilde{\bSigma}\,(\bQ_{u}^{\bnu} - \bmQ_u)\,du \,\Bigg]\,,
\end{split}
\label{eq:unconst_problem}
\end{equation}
where $\mathcal U$ is the set of admissible  strategies in the new variables.

\subsection{The dynamic programming equation} \label{sec:DPE}

The dynamic programming principle  suggests that the value function   (\ref{eq:unconst_problem})
is the unique classical solution to the   DPE
\begin{equation}
	\partial_{t}H+\mathcal{L}^\bmu H
+ \sup_{\bnu}\left\{\;\mathcal{L}^{\bnu}H\;\right\}
- \phi \,(\bbq-\bmQ_t)^{\intercal}\,\tilde{\bSigma}\,(\bbq-\bmQ_t)   =   0\,,
	\label{eq:HJB_equation}
\end{equation}
subject to the terminal condition
\begin{equation}\label{eq:value_function_terminal_condition}
H(T,y,\bz,\bbq,\bmu)  =  y + \bz^{\intercal} \, \bX ^{\intercal}\,\bbq + \bbq^{\intercal}\,(\tfrac{1}{2} \bX\,\bb\,\bX^\intercal - \balpha)\, \bbq+\btheta^{\intercal}\,\bX ^{\intercal}\,\bQ_{0} - \tfrac{1}{2} \,\bQ_0^\intercal\, \bX\,\bb\,\bX^\intercal \,\bQ_0 \,,
\end{equation}
and where $\mathcal{L}^{\bnu}$ is the infinitesimal generator of the process $(Y^\bnu,\,\bQ^{\bnu},\,\bZ)$, which acts on a smooth function $\varphi$ as follows
\begin{equation} \label{eq:infi_gen}
\begin{split}
&\mathcal{L}^{\bnu}\varphi(t,y,\bz,\bbq,\bmu)  \\
&\quad = \Big\{\!\left(\bX \,\bz - \ba\,\bnu\right)^{\intercal}\,\bnu + \bmu^\intercal \,\overline{\bb}^\intercal\,\bX^\intercal\,\bbq \Big\} \,\partial_{y}\varphi-\bnu^{\intercal}\,\partial_{q}\,\varphi -\bz^{\intercal}\,\bkappa\, \partial_{z}\,\varphi +\tfrac{1}{2}\,\mbox{Tr}\left(\bSigma\,\partial_{zz}\,\varphi\right)\,.
\end{split}
\end{equation}

\begin{proposition}
\textbf{Solving the DPE.} The DPE (\ref{eq:HJB_equation}) admits the solution
\begin{equation}
	H(t,y,\bz, \bbq, \bmu) =  y+\bz^{\intercal}\,\bA(t)\,\bz
+\bz^\intercal\,\bB(t,\bmu)
+\bbq^{\intercal}\,\bC(t)\,\bbq
+\bbq^\intercal\,\bD(t,\bmu)
+\bz^{\intercal}\,\bE(t)\,\bbq+F(t,\bmu)\,,
	\label{eq:ansatz1}
\end{equation}
if there exists unique matrix-valued functions $\bA(t)$ ($n\times n$), $\bB(t,\bmu)$ ($n\times 1$), $\bC(t)$ ($m\times m$), $\bD(t,\bmu)$ ($m\times 1$), $\bE(t)$ ($n\times m$), and  function $F(t,\bmu)$
that satisfy
\begin{enumerate}[(a)]
\item The  matrix Riccati equation
\begin{equation}
	\dot{\bG} + \bG \,\bM_1 \bG + \bG\, \bM_2 + \bM_2^\intercal \,\bG + \bM_3 = \bzero \,,
	\label{eq:matrix_riccati_block_form}
\end{equation}
with terminal condition $\bG(T) = \left[\begin{smallmatrix} \bzero^{(n,n)} & \bzero^{(n,m)} \\ \bzero^{(m,n)} & \frac{1}{2}\,\bX\,\bb\,\bX^\intercal-\balpha \end{smallmatrix} \right]$, where $\bG = \left[\begin{smallmatrix}
2\bA & \bE - \bX ^\intercal \\
\bE^{\intercal} -\bX & 2\bC
\end{smallmatrix}\right]$, $\bzero^{(j,k)}$ is a $j\times k$ matrix of zeros,
\begin{equation}
		\bM_1 = \,\tfrac{1}{2}
 \left[ \begin{smallmatrix} \bzero^{(n,n)} & \bzero^{(m,m)} \\ \bzero^{(m,m)} & \ba^{-1} \end{smallmatrix}\right],
\quad
		\bM_2 = \left[ \begin{smallmatrix} -\bkappa & \bzero^{(n,m)} \\
			\bzero^{(m,n)} & \bzero^{(m,m)} \end{smallmatrix} \right],
\quad
\text{and}
\quad
		\bM_3 =  \left[ \begin{smallmatrix} \bzero^{(n,n)} & - \bkappa\, \bX ^\intercal \\
			 - \bX  \, \bkappa^\intercal & -2\phi \,\tilde{\bSigma}  \end{smallmatrix} \right].
\label{eq:matrix riccati block M}
\end{equation}%
\item The linear matrix PDEs
\begin{subequations}
	\begin{align}
	\dot\bB + \mathcal{L}^\bmu \bB -\bkappa \,\bB +\tfrac{1}{2}\,(\bE^{\intercal}-\bX)^{\intercal}\,\ba^{-1}\,\bD &= \bzero^{(n)}\,, \\
	\dot\bD + \mathcal{L}^\bmu\bD + \bC^{\intercal}\,\ba^{-1}\,\bD + 2\,\phi\,\tilde{\bSigma}\;\bmQ_{t} + \bX\;\overline{\bb}\;\bmu &= \bzero^{(m)}\,,\label{eqn:PDE for D}\\
	\dot F + \mathcal{L}^\bmu F +\tfrac{1}{4}\bD^{\intercal}\,\ba^{-1}\,\bD+Tr(\bSigma\,\bA)-\phi\;\bmQ_{t}^{\intercal}\;\tilde{\bSigma}\;\bmQ_{t} &= 0\,,
	\end{align}
\label{eq:matrix_diff_equation}%
\end{subequations}
with terminal conditions
\[
\bB(T,\cdot) = \bzero^{(n)},  \quad \bD(T,\cdot) = \bzero^{(m)}, \quad
F(T)=\btheta^{\intercal}\,\bX ^{\intercal}\bQ_{0} - \frac{1}{2}\,\bQ_0^\intercal\,\bX\,\bb\,\bX^\intercal\,\bQ_0\,,
\]
and $\bzero^{(k)}$ denotes a vector of $k$ zeros.
\end{enumerate}
In the above, the dot notation denotes time derivative.
 \label{prop:solving_DPE}
\end{proposition}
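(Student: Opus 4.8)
The plan is to verify the claim by direct substitution of the quadratic \emph{ansatz} \eqref{eq:ansatz1} into the DPE \eqref{eq:HJB_equation} and matching coefficients of the monomials in $(\bz,\bbq)$. First I would record the partial derivatives of the candidate $H$. Writing $\bA$ and $\bC$ as symmetric without loss of generality, these are $\partial_y H = 1$, $\partial_z H = 2\bA\,\bz + \bE\,\bbq + \bB$, $\partial_q H = 2\bC\,\bbq + \bE^\intercal\bz + \bD$ and $\partial_{zz}H = 2\bA$, while $\partial_t H$ collects the time derivatives $\dot\bA,\dot\bB,\dot\bC,\dot\bD,\dot\bE,\dot F$. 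Because $\bmu$ is independent of $\bW$ and enters only through $\bB,\bD,F$, the operator $\mathcal{L}^\bmu$ annihilates the $\bmu$-free quadratic part and contributes only $\bz^\intercal\mathcal{L}^\bmu\bB + \bbq^\intercal\mathcal{L}^\bmu\bD + \mathcal{L}^\bmu F$.

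The second step is to carry out the inner optimisation in \eqref{eq:HJB_equation}. Using \eqref{eq:infi_gen} and $\partial_y H = 1$, the map $\bnu\mapsto\mathcal{L}^\bnu H$ equals
\[
(\bX\,\bz - \ba\,\bnu)^\intercal\bnu - \bnu^\intercal\partial_q H + \bmu^\intercal\,\overline{\bb}^\intercal\,\bX^\intercal\,\bbq - \bz^\intercal\bkappa\,\partial_z H + \mbox{Tr}(\bSigma\,\bA),
\]
which is strictly concave in $\bnu$ since $\ba$ is positive definite. The first-order condition yields the optimiser in closed form,
\[
\bnu^\star = \tfrac12\,\ba^{-1}\bigl(\bX\,\bz - \partial_q H\bigr),
\]
and substituting it back replaces the supremum by the quadratic $\tfrac14\,w^\intercal\ba^{-1}w$ with $w = \bX\,\bz - \partial_q H = (\bX - \bE^\intercal)\,\bz - 2\bC\,\bbq - \bD$. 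At this point the whole left-hand side of \eqref{eq:HJB_equation} is an explicit second-order polynomial in $(\bz,\bbq)$ with $\bmu$-dependent coefficients.

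The third step is to set the coefficient of each monomial to zero. The terms $\bz^\intercal(\cdot)\bz$, $\bbq^\intercal(\cdot)\bbq$ and $\bz^\intercal(\cdot)\bbq$ have coefficients free of $\bmu$ (they come from $\tfrac14 w^\intercal\ba^{-1}w$, from $-\bz^\intercal\bkappa\,\partial_z H$, and from the running penalty $-\phi(\bbq-\bmQ_t)^\intercal\tilde{\bSigma}(\bbq-\bmQ_t)$), and after symmetrising they produce three coupled matrix ODEs in $\bA$, $\bC$ and $\bE$. I would then check that, under the identification $\bG=\bigl[\begin{smallmatrix}2\bA & \bE-\bX^\intercal\\ \bE^\intercal-\bX & 2\bC\end{smallmatrix}\bigr]$, these three ODEs are exactly the single block Riccati equation \eqref{eq:matrix_riccati_block_form}: the $\bG\bM_1\bG$ block reproduces the $\ba^{-1}$-quadratic terms, $\bG\bM_2+\bM_2^\intercal\bG$ reproduces the mean-reversion terms, and $\bM_3$ supplies the $\bkappa\bX^\intercal$ cross terms together with the $-2\phi\tilde{\bSigma}$ penalty. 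The remaining monomials — linear in $\bz$, linear in $\bbq$, and the constant — carry the $\bmu$-dependence and the $\mathcal{L}^\bmu$, $\mbox{Tr}(\bSigma\bA)$ and $\overline{\bb}\,\bmu$ contributions; collecting them gives precisely the linear PDE system \eqref{eq:matrix_diff_equation} for $\bB$, $\bD$ and $F$. Matching the \emph{ansatz} against the terminal value \eqref{eq:value_function_terminal_condition} fixes the stated terminal data, in particular $\bA(T)=\bzero$, $\bE(T)=\bX^\intercal$, $\bC(T)=\tfrac12\bX\bb\bX^\intercal-\balpha$, $\bB(T)=\bzero$, $\bD(T)=\bzero$ and $F(T)=\btheta^\intercal\bX^\intercal\bQ_0-\tfrac12\bQ_0^\intercal\bX\bb\bX^\intercal\bQ_0$.

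The main obstacle is organisational rather than conceptual: assembling the three scalar quadratic identities into the compact block Riccati form and getting every transpose and factor of $\tfrac12$ to line up. In particular one must symmetrise $\bz^\intercal\bkappa\bA\bz$ and $\bz^\intercal\bkappa\bE\bbq$ with care, track the asymmetry between $\bX$ (an $m\times n$ map) and $\bX^\intercal$ in the off-diagonal blocks of $\bG$ and of $\bM_3$, and confirm that after the substitution of $\bnu^\star$ no cross term in $(\bz,\bbq)$ or $\bmu$ is left unmatched. Once the coefficient matching is complete the verification is immediate: any tuple $(\bA,\bB,\bC,\bD,\bE,F)$ solving \eqref{eq:matrix_riccati_block_form}--\eqref{eq:matrix_diff_equation} with the prescribed terminal data renders \eqref{eq:ansatz1} an exact solution of the DPE \eqref{eq:HJB_equation}.
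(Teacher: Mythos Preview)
Your proposal is correct and follows essentially the same route as the paper's proof: substitute the ansatz into the DPE, perform the pointwise optimisation in $\bnu$ (your $\bnu^\star=\tfrac12\,\ba^{-1}(\bX\bz-\partial_q H)$ is exactly the paper's $\bnu^*=-\tfrac12\,\ba^{-1}(2\bC\bbq+(\bE^\intercal-\bX)\bz+\bD)$), and then match coefficients of the monomials in $(\bz,\bbq)$ to obtain the block Riccati equation for $\bG$ and the linear PDEs for $\bB,\bD,F$. Your write-up is in fact more explicit than the paper's about where each term lands and about the terminal-condition matching.
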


\begin{proof}
See \ref{sec:proof solve DPE}.
\end{proof}

The following theorem shows that the solution to (\ref{eq:matrix_riccati_block_form}) is bounded on $ [0, T] $,
as long as we choose the terminal penalty $ \balpha $ to be large enough.
\begin{theorem}\label{thm:bounded}
	If $ \frac{1}{2} \,\bX\,\bb\,\bX^\intercal - \balpha $ is negative definite, the matrix Riccati differential equation \eqref{eq:matrix_riccati_block_form} has a bounded solution on $[0, T]$.
\end{theorem}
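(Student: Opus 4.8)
The plan is to exploit the block structure of $\bM_1,\bM_2,\bM_3$, which renders the Riccati system \eqref{eq:matrix_riccati_block_form} triangular and concentrates all of its nonlinearity in a single diagonal block. Throughout, write $\bG=\left[\begin{smallmatrix}\bG_{11} & \bG_{12}\\ \bG_{12}^\intercal & \bG_{22}\end{smallmatrix}\right]$, with $\bG_{11}$ of size $n\times n$, $\bG_{12}$ of size $n\times m$, and $\bG_{22}$ of size $m\times m$. Two preliminaries come first. Since the right-hand side of \eqref{eq:matrix_riccati_block_form} is a quadratic polynomial in $\bG$, it is locally Lipschitz, so a unique solution exists on a maximal subinterval of $[0,T]$ terminating at $T$, and the only way it can fail to extend to all of $[0,T]$ is through finite escape. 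Moreover, $\bM_1$ and $\bM_3$ are symmetric and $\bG(T)$ is symmetric, so $\bG^\intercal$ solves the same terminal-value problem and, by uniqueness, $\bG(t)=\bG(t)^\intercal$ for all $t$; this justifies the block notation and the definiteness arguments below. It therefore suffices to establish an a priori bound on $\bG$ over $[0,T]$.

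Performing the block products in \eqref{eq:matrix_riccati_block_form}, the bottom-right block satisfies a self-contained symmetric Riccati equation,
\begin{equation}
-\dot{\bG}_{22}=\tfrac12\,\bG_{22}\,\ba^{-1}\,\bG_{22}-2\phi\,\tilde{\bSigma}\,,\qquad \bG_{22}(T)=\tfrac12\,\bX\,\bb\,\bX^\intercal-\balpha\,,
\end{equation}
while the remaining blocks obey $-\dot{\bG}_{12}=\tfrac12\,\bG_{12}\,\ba^{-1}\,\bG_{22}-\bkappa^\intercal\,\bG_{12}-\bkappa\,\bX^\intercal$ and $-\dot{\bG}_{11}=\tfrac12\,\bG_{12}\,\ba^{-1}\,\bG_{12}^\intercal-\bG_{11}\,\bkappa-\bkappa^\intercal\,\bG_{11}$. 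The point is that this is a one-way cascade: once $\bG_{22}$ is known and bounded, the equation for $\bG_{12}$ is \emph{linear} in $\bG_{12}$ with bounded, continuous coefficients and a constant forcing term $-\bkappa\,\bX^\intercal$; and once $\bG_{12}$ is known and bounded, the equation for $\bG_{11}$ is \emph{linear} in $\bG_{11}$ with bounded forcing $\tfrac12\,\bG_{12}\,\ba^{-1}\,\bG_{12}^\intercal$. Linear ODEs with coefficients bounded on the compact interval $[0,T]$ admit no finite escape, so boundedness of $\bG_{12}$ and then of $\bG_{11}$ follows immediately. All of the work is thus in the single block $\bG_{22}$.

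For $\bG_{22}$ I would show that the negative-definite terminal datum propagates backward and yields two-sided bounds. In reversed time $\tau=T-t$, $\bR(\tau):=\bG_{22}(T-\tau)$ solves $\dot{\bR}=\tfrac12\,\bR\,\ba^{-1}\,\bR-2\phi\,\tilde{\bSigma}$ with $\bR(0)=\tfrac12\,\bX\,\bb\,\bX^\intercal-\balpha\prec\bzero$ by hypothesis. For the upper bound, I claim the cone $\{\bR\preceq\bzero\}$ is forward invariant: if the top eigenvalue of $\bR$ first reaches $0$ with unit eigenvector $v$, then $\bR v=\bzero$ annihilates the quadratic term and $v^\intercal\dot{\bR}v=-2\phi\,v^\intercal\tilde{\bSigma}v\le0$, using $\phi\ge0$ and $\tilde{\bSigma}\succeq\bzero$ (a principal submatrix of the correlation matrix $\bSigma$); a Nagumo-type argument then keeps $\bR(\tau)\preceq\bzero$, ruling out any blow-up to $+\infty$. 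For the lower bound, set $h(\tau)=\lambda_{\min}(\bR(\tau))\le0$; differentiating the Rayleigh quotient along the bottom eigenvector gives $\dot{h}\ge\tfrac12\,\lambda_{\min}(\ba^{-1})\,h^2-2\phi\,\lambda_{\max}(\tilde{\bSigma})$, whose right-hand side is strictly positive once $h$ is sufficiently negative, so $h$ cannot diverge to $-\infty$ and stays above $-\max\{\,|\lambda_{\min}(\bR(0))|,\ (4\phi\,\lambda_{\max}(\tilde{\bSigma})/\lambda_{\min}(\ba^{-1}))^{1/2}\,\}$. Hence $\bG_{22}$ is bounded on $[0,T]$, finite escape is precluded, and through the cascade so is all of $\bG$.

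The genuine obstacle is the $\bG_{22}$ block; everything else is either a bookkeeping reduction or standard boundedness of linear ODEs on a compact interval. Within that block, the delicate point is making the cone-invariance and smallest-eigenvalue arguments rigorous at instants where eigenvalues are repeated or cross, which I would handle by working with the upper and lower Dini derivatives of $\lambda_{\max}$ and $\lambda_{\min}$ rather than assuming differentiability of the eigenvalues. This is precisely where the hypothesis $\tfrac12\,\bX\,\bb\,\bX^\intercal-\balpha\prec\bzero$ enters: it launches $\bR$ strictly inside the negative-definite cone, which is what prevents the positive-semidefinite quadratic term $\tfrac12\,\bR\,\ba^{-1}\,\bR$ from ever driving the solution to a finite-time blow-up.
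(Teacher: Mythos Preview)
Your proof is correct and takes a genuinely different route from the paper. The paper invokes a general comparison theorem for symmetric matrix Riccati terminal-value problems (quoted from Kratz) and constructs an auxiliary equation, obtained by replacing $\bM_3$ with a larger $\tilde{\bM}_3\succeq\bM_3$, whose solution decouples into explicitly solvable blocks (a linear Lyapunov-type block and a simple $m\times m$ Riccati block); boundedness of the auxiliary solution then transfers to $\bG$ by comparison. You instead exploit the triangular cascade of the specific system directly: the $(2,2)$ block is an autonomous $m\times m$ symmetric Riccati equation, and once it is bounded the $(1,2)$ and $(1,1)$ blocks are linear ODEs on a compact interval. For the nonlinear block you argue by cone invariance ($\bR\preceq\bzero$ is preserved) and a scalar differential inequality on $\lambda_{\min}$, which yields explicit two-sided bounds. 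Both arguments isolate the same difficulty---the $m\times m$ block carrying the terminal datum $\tfrac12\bX\bb\bX^\intercal-\balpha$---but the paper handles it by matrix comparison against an explicit solution, whereas you handle it by eigenvalue analysis. The paper's route is shorter if one is willing to cite the matrix-Riccati comparison as a black box; yours is self-contained, avoids constructing an auxiliary problem, and makes it transparent why the negative-definiteness hypothesis is exactly what is needed (it launches $\bR$ strictly inside the invariant cone that blocks blow-up of the quadratic term $\tfrac12\bR\,\ba^{-1}\bR$). Your acknowledged care with Dini derivatives at eigenvalue crossings is the only point requiring attention, and it is standard.
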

\begin{proof}
	See \ref{sec:Proof bounded}.
\end{proof}

Furthermore, the linear matrix PDEs \eqref{eq:matrix_diff_equation} admit a unique probabilistic representation.
\begin{theorem} \label{thm:feynman-kac}
	Suppose the assumption of Theorem \ref{thm:bounded} are enforced, and further
	$ \mathbb{E} [\,|\bmu^\pm_0|^2\,] < \infty $ and  there exists a constant $C$ such that 
	and $\mathbb E_{0,\bmu}\left[\;|\bmu_t^{\pm}|^2\; \right] < C(1+|\bmu^\pm|^2)$ for all $t\in[0, T]$. Let $\bB(t,\bmu) $, $ \bD(t,\bmu) $ and $F(t,\bmu)$ be $ C^{1,2}([0,T), \mathbb{R}^m) $ solutions to (\ref{eq:matrix_diff_equation}),
	each with quadratic growth in $ \bmu $, uniformly in $ t $, then
	\begin{subequations}
		\begin{align}
			\bD(t, \bmu) = &  \int_t^T \boldsymbol{:}\be^{\int_t^u \bC^\intercal(s)\,\ba^{-1}\,ds}\boldsymbol{:}\; \left\{ 2\,\phi\;\tilde{\bSigma} \,\bmQ_u + \;\bX\;\overline{\bb}\;\mathbb{E}_{t,\bmu} \left[\;\bmu_u\;\right] \right\} \; du \, , \label{eq:feynman-kac_D} \\
			\bB(t, \bmu) =&\, \tfrac{1}{2} \int_t^T e^{-\bkappa\,(u - t)} \;\left(\bE^\intercal - \bX\right)^\intercal \;\ba^{-1}
\;\mathbb{E} \left[ \bD(t, \bmu_u) \right] \,du\,, \\
			F(t\,\bmu) =& \int_t^T
			\left\{
			\tfrac{1}{4}\;\mathbb{E}_{t,\bmu}\left[\bD^{\intercal}(u,\bmu_u)\,\ba^{-1}\,\bD(u,\bmu_u)\right]  + Tr(\bSigma\,\bA(u))-\phi\;\bmQ_{u}^{\intercal}\,\tilde{\bSigma}\,\bmQ_{u}
			\right\}du\,,		
		\end{align}		
	\end{subequations}
where the notation $\boldsymbol{:}\be^{\int_u^t \cdot\,ds}\boldsymbol{:}$ represents the time-ordered exponential.\footnote{Recall that the time-ordered exponential of a time dependent matrix $\bA(t)$ is defined as $\boldsymbol{:}\be(\int_u^t \bA(s)\,ds)\boldsymbol{:}=\lim_{||\Pi||\downarrow 0 } \prod_{i=1}^{n_{\Pi}}e^{A(t_{i-1})\,\Delta t_i}$, where $\Pi:=\{u=t_0,t_1,\dots,t_n=t\}$ is a partition of $[u,t]$, and $\Delta t_i=(t_i-t_{i-1})$.}

\end{theorem}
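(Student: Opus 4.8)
The plan is to read each identity as a Feynman–Kac representation and verify it by applying Dynkin's formula to a suitably integrated version of the candidate solution along the order-flow process $\bmu$, using the PDE to annihilate the drift and then passing to expectations. Since the theorem posits a priori a $C^{1,2}$ solution with quadratic growth, establishing the three displayed formulas for any such solution simultaneously delivers uniqueness. The natural order is to treat $\bD$ first, since it solves a self-contained equation driven only by $\bmu$ and the (already bounded, by Theorem \ref{thm:bounded}) coefficient $\bC$, and then $\bB$ and $F$, both of which source off $\bD$.

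For the $\bD$-equation, the essential feature is the matrix ``potential'' $\bC^{\intercal}\,\ba^{-1}$, which is killed by the integrating factor $\boldsymbol{:}\be^{\int_t^s \bC^{\intercal}(r)\,\ba^{-1}\,dr}\boldsymbol{:}$; because $\bC$ is bounded on $[0,T]$ this time-ordered exponential is bounded as well, and it satisfies the linear ODE $\tfrac{d}{ds}\boldsymbol{:}\be^{\int_t^s \cdots}\boldsymbol{:} = \boldsymbol{:}\be^{\int_t^s \cdots}\boldsymbol{:}\,\bC^{\intercal}(s)\,\ba^{-1}$ with value $\bI$ at $s=t$. First I would fix $(t,\bmu)$ and apply Dynkin's formula to $s\mapsto \boldsymbol{:}\be^{\int_t^s \bC^{\intercal}(r)\,\ba^{-1}\,dr}\boldsymbol{:}\,\bD(s,\bmu_s)$ on $[t,T]$. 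The drift of $\bD(s,\bmu_s)$ is $\partial_s\bD + \mathcal{L}^{\bmu}\bD$, while the integrating factor contributes $\boldsymbol{:}\be^{\int_t^s\cdots}\boldsymbol{:}\,\bC^{\intercal}\,\ba^{-1}\,\bD$; by the product rule the total drift is $\boldsymbol{:}\be^{\int_t^s\cdots}\boldsymbol{:}$ times the left-hand side of \eqref{eqn:PDE for D}, which equals $-\boldsymbol{:}\be^{\int_t^s\cdots}\boldsymbol{:}\,(2\,\phi\,\tilde{\bSigma}\,\bmQ_s + \bX\,\overline{\bb}\,\bmu_s)$. Adding the integral of this source produces a local martingale; taking expectations, evaluating at $s=T$ with $\bD(T,\cdot)=\bzero$, and rearranging yields \eqref{eq:feynman-kac_D}.

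The main obstacle is upgrading this local martingale to a genuine martingale, so that its expectation vanishes. Here I would invoke the standing hypotheses $\mathbb{E}[\,|\bmu^{\pm}_0|^2\,]<\infty$ and $\mathbb{E}_{0,\bmu}[\,|\bmu_t^{\pm}|^2\,]<C(1+|\bmu^{\pm}|^2)$ together with the assumed quadratic growth of $\bD$ in $\bmu$, uniform in $t$. Since the integrating factor is bounded, the martingale integrand inherits at most quadratic growth in $\bmu$, so a localization argument (stopping at exit times of large balls) combined with the uniform second-moment bound and dominated convergence shows the stopped martingales converge in $L^1$ with vanishing limiting expectation. If $\bmu$ carries jumps, the same second-moment control dominates the predictable quadratic variation of the compensated jump part, so the argument is unchanged.

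Finally, $\bB$ and $F$ follow by the same mechanism with the $\bD$-representation in hand. For $\bB$ the potential $-\bkappa$ is constant, so the time-ordered exponential collapses to the ordinary matrix exponential $e^{-\bkappa(u-t)}$, and integrating the source $\tfrac12(\bE^{\intercal}-\bX)^{\intercal}\,\ba^{-1}\,\bD$ against it reproduces the stated formula. The $F$-equation has no potential term, so it reduces to a pure expected time-integral of its source; the only new point is integrability of $\tfrac14\,\bD^{\intercal}\,\ba^{-1}\,\bD$. This is controlled by observing from \eqref{eq:feynman-kac_D} that $|\mathbb{E}_{t,\bmu}[\bmu_u]|\le(\mathbb{E}_{t,\bmu}[|\bmu_u|^2])^{1/2}\le C'(1+|\bmu|)$, so $\bD$ is in fact at most \emph{linear} in $\bmu$, whence $\bD^{\intercal}\,\ba^{-1}\,\bD$ has at most quadratic growth and is integrable under the available second-moment bound; the remaining deterministic terms $\mbox{Tr}(\bSigma\,\bA)$ and $\phi\,\bmQ_u^{\intercal}\,\tilde{\bSigma}\,\bmQ_u$ are bounded on $[0,T]$. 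Since every $C^{1,2}$ solution with quadratic growth must coincide with these explicit expressions, the representations in \eqref{eq:matrix_diff_equation} are unique, completing the proof.
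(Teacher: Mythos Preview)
Your argument is correct and in fact more direct than the paper's. You apply Dynkin's formula straight to $s\mapsto \boldsymbol{:}\be^{\int_t^s \bC^{\intercal}\ba^{-1}\,dr}\boldsymbol{:}\,\bD(s,\bmu_s)$, using the defining ODE of the time-ordered exponential to cancel the potential term, and then handle the true-martingale upgrade by localization and the second-moment bound on $\bmu$. The paper instead works through a partition scheme: it replaces $\bC^\intercal\ba^{-1}$ by a piecewise-constant approximation $\bwC^\Pi$, builds the corresponding time-ordered exponential explicitly as a finite product, defines an approximating $\bD^\Pi$, shows via Dynkin that $\bD^\Pi$ solves the approximated PDE, and then proves that the error $\mathfrak{E}^\Pi=\bD^\Pi-\bD$ satisfies a linear PDE whose Feynman--Kac representation has source $(\bwC^\Pi-\bwC)\,\bD$ and hence vanishes as $|\Pi|\to0$ by dominated convergence. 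Your route avoids this detour by taking the ODE characterization of the time-ordered exponential as given; the paper's approximation argument, by contrast, essentially re-derives that characterization in situ. You also go further than the paper in two respects: you treat $\bB$ and $F$ explicitly (the paper's proof only details $\bD$), and your observation that the representation \eqref{eq:feynman-kac_D} forces $\bD$ to be at most linear in $\bmu$---hence $\bD^\intercal\ba^{-1}\bD$ is at most quadratic and integrable---is a clean way to close the $F$-argument that the paper does not spell out.
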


\begin{theorem} \label{thm:verification1}
\textbf{Verification.}
Suppose the assumptions in Theorem \ref{thm:bounded} and Theorem \ref{thm:feynman-kac} are enforced,
then the candidate value function (\ref{eq:ansatz1}) is indeed the solution to the control problem. Moreover, the trading rate given by
	\begin{equation}\label{eq:optimal_control}
		\bnu_{t}^{*} = -\tfrac{1}{2}\,\ba^{-1}\,\Big\{\; 2\,\bC(t) \,\bQ_{t}^{\bnu^*} + \left(\bE^{\intercal}(t) - \bX \right)\,(\bS_t-\btheta) + 		 \bD(t,\bmu_t) \; \Big\} \,,			
	\end{equation}
is admissible and optimal.
\end{theorem}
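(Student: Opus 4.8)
The plan is to run the classical verification argument: show that the candidate $H$ of \eqref{eq:ansatz1} dominates the performance criterion $H^\bnu$ for every admissible control, and that equality is attained by the feedback law $\bnu^*$ of \eqref{eq:optimal_control}. By Proposition~\ref{prop:solving_DPE} together with Theorems~\ref{thm:bounded} and~\ref{thm:feynman-kac}, the coefficient functions $\bA,\bB,\bC,\bD,\bE,F$ exist, are $C^1$ in $t$, and inherit the stated boundedness and growth; hence the ansatz $H$ is $C^{1,2}$ in its arguments and, by construction, solves the DPE \eqref{eq:HJB_equation} with terminal data \eqref{eq:value_function_terminal_condition}. A key structural remark is that $\partial_y H=1$, which is exactly what renders the Hamiltonian a strictly concave quadratic in $\bnu$.

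First I would fix an arbitrary $\bnu\in\mathcal U$ and apply It\^o's formula to $s\mapsto H(s,Y_s^\bnu,\bZ_s,\bQ_s^\bnu,\bmu_s)$ on $[t,T]$. The finite-variation part of the increment is precisely $\partial_s H+\mathcal L^\bmu H+\mathcal L^\bnu H$, and the DPE \eqref{eq:HJB_equation} yields, for every admissible $\bnu$,
\[
\partial_s H+\mathcal L^\bmu H+\mathcal L^\bnu H \;\le\; \phi\,(\bQ_s^\bnu-\bmQ_s)^\intercal\,\tilde{\bSigma}\,(\bQ_s^\bnu-\bmQ_s)\,,
\]
with equality exactly when $\bnu$ maximises $\mathcal L^\bnu H$. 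After localising by stopping times $\tau_k\uparrow T$ that kill the local-martingale part (the $\bsigma^\intercal\,\partial_z H$ Brownian term together with the martingale associated with the generator $\mathcal L^\bmu$ of $\bmu$), taking expectations, and passing to the limit, I would obtain
\[
H(t,y,\bz,\bbq,\bmu)\;\ge\;\mathbb E_{t,y,\bz,\bbq,\bmu}\!\left[H(T,Y_T^\bnu,\bZ_T,\bQ_T^\bnu,\bmu_T)-\phi\!\int_t^T(\bQ_u^\bnu-\bmQ_u)^\intercal\tilde{\bSigma}(\bQ_u^\bnu-\bmQ_u)\,du\right].
\]
Since the terminal condition \eqref{eq:value_function_terminal_condition} coincides with the terminal payoff in \eqref{eq:unconst_problem}, the right-hand side is $H^\bnu(t,y,\bz,\bbq,\bmu)$, giving $H\ge H^\bnu$ for all admissible $\bnu$.

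Next I would establish equality for $\bnu^*$. Because $\partial_y H=1$ and $\ba$ is positive definite, the map $\bnu\mapsto\mathcal L^\bnu H$ is strictly concave, so its unique maximiser is pinned down by the first-order condition $\bX\bz-2\ba\bnu-\partial_q H=0$; substituting $\partial_q H=2\,\bC\,\bbq+\bD+\bE^\intercal\bz$ recovers exactly \eqref{eq:optimal_control}. It then remains to check $\bnu^*\in\mathcal U$: inserting \eqref{eq:optimal_control} into $d\bQ_t^{\bnu^*}=-\bnu_t^*\,dt$ yields a linear SDE for $\bQ^{\bnu^*}$ with coefficients that are bounded by Theorem~\ref{thm:bounded} and integrable, hence admitting a unique strong solution with finite second moments. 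Predictability is automatic from the feedback form, and finiteness of $\int_0^T|\nu_u^{*,i}|\,du$ follows from these moment bounds together with the growth of $\bD$ in $\bmu$ supplied by Theorem~\ref{thm:feynman-kac}.

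The main obstacle is the integrability bookkeeping that justifies discarding the local-martingale term and exchanging limit with expectation. Concretely, one needs uniform $L^2$ control of the stochastic integrands along both a generic $\bnu$ and the closed-loop $\bnu^*$: this combines the boundedness of $\bA,\bC,\bE$ from Theorem~\ref{thm:bounded}, the quadratic growth of $\bB,\bD,F$ in $\bmu$ from Theorem~\ref{thm:feynman-kac}, the Gaussian moments of the Ornstein--Uhlenbeck process $\bZ$, and the hypotheses $\mathbb E[|\bmu_0^\pm|^2]<\infty$ and $\mathbb E_{0,\bmu}[|\bmu_t^\pm|^2]\le C(1+|\bmu^\pm|^2)$ on the order-flow process. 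With these estimates the localising sequence is removed by dominated convergence, completing the inequality $H\ge H^\bnu$ and, via the maximiser $\bnu^*$, the reverse equality $H=H^{\bnu^*}$, which simultaneously identifies $H$ as the value function and $\bnu^*$ as optimal.
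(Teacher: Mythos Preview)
Your proposal is correct and follows the classical verification route that underlies the paper's argument. The paper, however, compresses the It\^o/DPE step by invoking a standard verification theorem (citing \cite{oksendal2005applied}) and devotes its effort solely to checking that theorem's hypotheses: it writes the closed-loop inventory explicitly as
\[
\bQ_t^{\bnu^*} = \boldsymbol{:}\be^{-\int_0^t \ba^{-1} \bC(s)\,ds}\boldsymbol{:}\,\bQ_0 - \int_0^t \boldsymbol{:}\be^{-\int_u^t \ba^{-1} \bC(s)\,ds}\boldsymbol{:}\Big\{(\bE^\intercal(u)-\bX)\bZ_u+\bD(u,\bmu_u)\Big\}\,du
\]
to establish uniqueness of the state equation, and then shows $\bnu^*\in\mathbb L^2(\Omega\times[0,T))$ by bounding $\bZ$, $\bQ^{\bnu^*}$, and $\bD$ separately, using the linear bound $|\bD(t,\bmu)|\le C_2(1+|\bmu|)$ extracted from the Feynman--Kac representation \eqref{eq:feynman-kac_D} (sharper than the quadratic growth you cite). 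Your explicit localisation-and-dominated-convergence argument is exactly what the cited theorem does internally, so the two proofs differ only in how much is outsourced to the literature.
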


\begin{proof}
	See \ref{sec:proof_of_verification_theorem}.
\end{proof}

The optimal trading rate can be interpreted as the liquidation strategy of AC, plus modifications due to co-integration, order flow impact, and target inventory. In our setup, we obtain the  AC strategy to liquidate a portfolio of $m$ assets by
removing co-integration (setting $\bkappa = 0$ in (\ref{eqn: cointegrated midprices})),
removing the impact of order flow (setting $\overline{\bb} = 0$ in (\ref{eq:price_dynamics}))
and setting the target inventory schedule $\bmQ_t = \bzero$ for all $t  \in [0,T]$.
Note that when we trade in all assets, so that $m=n$, $\bX$ becomes the identity matrix.
With these assumptions $\bA(t) = \bzero^{(n,n)}$, $\bB(t) = \bzero^{(n)}$, $\bD(t) = \bzero^{(m)}$,  $\bE(t) = \bI_m$ (an $m\times m$ identity matrix), for all $t \in [0, T]$ in \eqref{eq:matrix_riccati_block_form}.
Therefore, the multi-asset  AC strategy is to trade at the speed
	\begin{equation}\label{eq:optimal_control AC}
		\bnu_{t}^{AC} = -\,\ba^{-1} \; \bC(t) \; \bQ_{t}^{\bnu^{AC}} \, .		
	\end{equation}

The difference between the optimal trading strategy \eqref{eq:optimal_control} and the AC
strategy consists of two components. The first is $\left(\bE^{\intercal} - \bX \right)\,(\bS_t-\btheta)$,
which accounts for co-integration in prices, and allows the trader to take advantage of price deviations from all assets -- not just the ones she is trading. This modification vanishes as the strategy approaches the end of the trading horizon because the terminal conditions enforce
$\bE^\intercal\xrightarrow[]{t\to T}\bX$.

The second component, $\bD(t,\bmu)$,   is  the adjustment due to the inventory target and order flow. For example,
if the agent targets zero inventory throughout the life of the strategy,
$\bmQ_t = \bzero$ for all $t$,  and ignores the effect of order flow from other traders, i.e., $\bar{\bb} = \bzero$, then $\bD(t,\bmu)$ vanishes. Moreover, the terminal conditions for $\bD$ imply that the effect of this adjustment in the trading strategy diminishes
as the strategy approaches the terminal date.

As a final point, if the order flow of other agents $\bmu$ is affine, specifically, if $\EE[\bmu_u | \mathcal F_t ] = \balpha(t;u) + \bbeta(t;u)\,\bmu_t$ for some deterministic functions $\balpha(t;u)$ ($dim=n\times1$) and $\bbeta(t;u)$ ($dim=n\times n$), then the PDE system \eqref{eq:matrix_diff_equation} admits the affine ansatz for $\bB(t,\bmu)=\bB_0(t) + \bB_1(t)\,\bmu$ and $\bD(t,\bmu)=\bD_0(t) + \bD_1(t)\,\bmu$, while $F(t,\bmu)= F_0(t) + \bF_1^\intercal(t)\,\bmu + \bmu^\intercal \bF_2(t)\bmu$. Two examples of  affine order flow models are (i) the shot-noise processes, where order flow jumps up at Poisson times and mean-reverts back to zero, with idiosyncratic upward and downward jumps, as well as co-jumping order flow; and (ii) the multivariate Hawkes process, where increases in order flow induces excitation in order flow among a subset of assets or all assets. Both of these models have appeared in a number of papers that study the empirical aspects of order flow.

\subsection{Guaranteed liquidation} \label{sec:limiting_case}

To ensure full liquidation by the end of the trading window, i.e., $\bQ_{T}^{\bnu^*}=\bzero$, we make the liquidation penalty arbitrarily expensive, i.e., all the components of $\balpha$ go to $\infty$. In this  case,  the terminal condition for $\bC$, see \eqref{eq:matrix_riccati_block_form}, becomes arbitrarily large as the entries of the terminal condition $\bC(T)$ go to $ -\infty$.
Now, let us assume that in this limiting case $\bA$, $\bC$, and $\bE$ have the asymptotic series
\begin{equation}  \label{eq:asy_expansion}
	\bA(t)   =   \sum_{n=0}^{\infty} \mathscr{A}_n\, \tau^n\,, \qquad
	\bC(t) = \sum_{n=-1}^{\infty} \mathscr{C}_n \,\tau^n\,, \qquad
	\bE(t) = \sum_{n=0}^{\infty} \mathscr{E}_n \,\tau^n\,,
\end{equation}
where $\tau = T-t$ and $\mathscr A_n$, $\mathscr C_n$ and $\mathscr E_n$
are constant matrices with the same dimensions as $\bA$, $\bC$, and $\bE$ respectively.

The terminal conditions imply that the first term in the asymptotic series $\bA(t)$ is  $\mathscr A_0 = 0$ and in $\bE(t)$ is $\mathscr E_0 = \bX^\intercal$. Moreover,  substituting (\ref{eq:asy_expansion}) into \eqref{eq:matrix_riccati_block_form}, and
matching terms with the same power in $\tau$, we obtain  the following coefficients for the series of $\bA(t)$, $\bC(t)$, and $\bE(t)$
\begin{equation}\label{eq:asy_coefs}
	\mathscr A_1 = 0\,, \quad
	\mathscr C_{-1} = \frac{1}{2} \,\bX\,\bb\,\bX^\intercal -\ba\,, \quad
	\mathscr C_0 = \bzero\,, \quad
	\mathscr E_1 = -\frac{1}{2} \, \bkappa \, \bX^\intercal\,.
\end{equation}
We also show that the term $ \bD(t,\bmu) $ in \eqref{eq:optimal_control} has the asymptotic bound $ \mathcal O(\tau)\,\bmu + \mathcal{O}(\tau) $.
See \ref{sec:detail_asy_calculation} for further details.

Thus, when $\balpha\to\binfty$, so that $\bC(T)\to-\binfty$ (recall that $\bC(T)=\frac{1}{2} \,\bX\,\bb\,\bX^\intercal -\ba$ from \eqref{eq:matrix_riccati_block_form}), we employ the asymptotic series \eqref{eq:asy_expansion} (using the first two terms for each series), to write the optimal liquidation speed \eqref{eq:optimal_control} as follows
\begin{equation}\label{eqn: nu star with infty alpha}
	\bnu_t^*  =  \frac{\bQ_t^{\bnu^*}}{\tau} + \mathcal O(\tau)\, \bZ_t\, + \mathcal O(\tau)\,\bmu + \mathcal{O}(\tau)\,.
\end{equation}
The result is that near maturity, i.e., $\tau\to 0$, the optimal strategy behaves like TWAP (time-weighted-average-price), which is given by the first term in the right-hand side of \eqref{eqn: nu star with infty alpha}. Thus, as the strategy approaches the terminal date, the remaining inventory is liquidated at a constant rate.

\section{Simulations:  Portfolio Liquidation}\label{sec:Numerical-Example}
This section shows  the performance of the strategy  under various assumptions about the set of assets employed  by the investor and illustrates how the strategy performs if the investor does not have enough data to calibrate the model parameters. We first describe how the model parameters are estimated using  exchange data  from five assets traded on the  Nasdaq. Then we compare the performance of the strategy   to  that of AC when the agent liquidates a portfolio of two assets: using the information of another additional set of three stocks, using only the information of the two assets, and allowing asset repurchases. In particular, Subsection \ref{subsubsec: correct parameter estimates} assumes that the investor has enough data to correctly estimate the parameters of the model, and Subsection \ref{subsubsec: incorrect parameter estimates} assumes that the investor does not have access to enough data, so the estimates of the parameters she obtains are incorrect.

 \subsection{Data and model parameters}

 To focus on the additional value added by the co-integration information, we turn off the permanent impact from order flow of other agents, i.e., we set $\overline{\bb}=\bzero$. For an analysis of how agents can benefit from order flow information see \cite{cartea2015incorporating}. We also turn off the permanent impact from the agent's own trading activity, i.e., we assume $ \bb = 0 $.

 We employ high-frequency data from five stocks traded on the Nasdaq exchange: INTC, SMH, FARO, NTAP and ORCL. We use all the messages sent to the exchange in November 3, 2014 to build the LOB at a millisecond frequency. We  sample the best quotes and posted volume  every 60 seconds during the regular trading hours. Midprices are computed as the weighted average of the best bid and the best ask, with weights equal to the volume posted at the best ask and the best bid respectively -- these prices are also referred to as microprice. We remove the first and last half hour to reduce the noise in prices due to the opening and closing auctions. Thus, for the trading day we have a midprice time series  of 330 data points per stock.

The five stocks we employ  are  in the high-tech sector, thus sharing a common trend, so we expect them to be co-integrated. We employ a VAR(1), vector-autoregressive of order 1,  model of the joint midprice dynamics which is a discrete-time version of the price process in  \eqref{eq:price_dynamics}, and apply Johansen's co-integration test to determine the number of co-integrating factors  -- which corresponds to the rank of the matrix $\boldsymbol\kappa$. Table  \ref{table:jcitest_5S60S} reports the p-values  of the co-integration test for the number of co-integrating factors, where $r_i$ corresponds to the null hypothesis that there are at most $i$ co-integrating factors.
\begin{table}[h!]
	\centering
\footnotesize
	\begin{tabular}{cccccc} \hhline{======}
		Model & $r_0$ & $r_1$ & $r_2$ & $r_3$ & $r_4$ \\ \hline
		p-value &  0.001 & 0.223 & 0.409 & 0.4637 & 0.584 \\ \hline
	\end{tabular}
\caption{Johansen's co-integration test for the number of co-integrating factors, and
		$r_i$ corresponds to the null hypothesis that there are at most $i$ co-integrating factors. Nasdaq data November 3, 2014.}
\label{table:jcitest_5S60S}
\end{table}
In Figure \ref{figure:coint_factors}, we show the realisation of this factor through the trading day.

\begin{figure}[b!]
\begin{minipage}{0.49\textwidth}
\begin{center}
\includegraphics[width=0.82\textwidth]{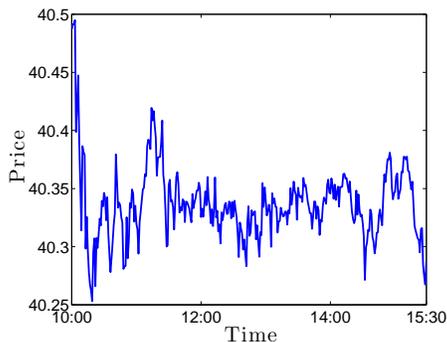}
\end{center}
\end{minipage}
\begin{minipage}{0.49\textwidth}
\caption{In-sample path of the calibrated co-integrating factor over the trading day.}\label{figure:coint_factors}
\vspace{3em}
\end{minipage}
\end{figure}

Here we  choose a midprice  model with 1 co-integration factor, and show in the first two rows in Table \ref{table: theta coint fact a b} the parameter estimates for the mean reverting level $\btheta$ and the co-integration factors of the VAR(1) model (data November 3, 2014). The rest of the table is taken from \cite{cartea2015incorporating} which employs data for the entire year 2014. Rows 3  and 4 are estimates of temporary impact with no cross effects. We choose the temporary impact model with no cross effects to keep our model parsimonious. The bottom 2 rows show the average incoming rates of MOs and their average volume: $\lambda^-$ is the average number of sell MO per hour, $\mathbb E[\eta^-]$ is the average volume of sell MOs. The standard deviation of the estimate is shown in parentheses.

Tables \ref{table:est_kappa_5S60S} and \ref{table:est_Sigma_5S60S} show the mean-reverting matrix $\bkappa$, the variance-covariance matrix $\bSigma$, respectively. Time is $T = 1$, which corresponds to 6.5 hours (1 trading day).

Below we compare the performance of different trading strategies, one of which is AC. In this particular case, we assume that midprices satisfy the SDE
\begin{equation}\label{eq:price_dynamics_AC}	d\bP_t=
(\bsigma^{AC})^{\intercal}\,d\bW^{AC}_{t}\,.
\end{equation}
where $(\bsigma^{AC})^{\intercal}$ is the Cholesky decomposition of the asset prices correlation matrix $\bSigma^{AC} = (\bsigma^{AC})^{\intercal}\,\bsigma^{AC}$. That is, for the AC strategy, asset prices are assumed to be driven  by correlated Brownian motions but are not assumed to be co-integrated.  Table \ref{table:est_Sigma_AC} shows the estimated correlation matrix $\bSigma^{AC}$.

\subsection{Liquidation of portfolio with two assets}\label{subsec: liquidation of portfolio with two assets}
The investor's objective is to liquidate 4,600 shares of INTC and 900 shares of SMH over 1 hour. According to Table \ref{table: theta coint fact a b}, these two numbers correspond to 1\% and 4\%
of the average number of sell volume over 1 hour, respectively. The investor sends MOs at 1 second intervals.

\noindent\textbf{ Strategies.} To illustrate the performance of the liquidation algorithm we consider the following four strategies:
\begin{enumerate}

\setlength\itemsep{0.5em}

	\item \underline{Unrestricted liquidation} (UL): the strategy $\bnu^*_t$ is as in  \eqref{eq:optimal_control}  with  target schedule $\bmQ_t = \bzero$ for all $t$. Recall that the investor's set of admissible strategies does not require the trading speed to remain non-negative. So UL may, if it is optimal, repurchase shares before the end of the trading horizon.

	\item \underline{Restricted liquidation} (RL): the strategy is as that of UL, but the liquidation speed is set to zero if it is optimal to repurchase, i.e., $\max\left(\bnu^*_t,\,0\right)$. This is an ad-hoc adjustment to the optimal strategy to preclude repurchases along the trading window. The max operator $\max(\,\cdot\,, \,0)$ is to be interpreted componentwise: when the speed of liquidation of an asset in the vector  \eqref{eq:optimal_control} is negative, only that component  is set to zero. Finally, the strategy stops trading when inventory hits zero.  The derivation of the  optimal strategy under the constraint $ \bnu^*_t \geq 0 $  is beyond the scope of our study.

	\item \underline{Unrestricted liquidation with target} (ULT): the strategy is as  \eqref{eq:optimal_control} where the  target schedule for each asset is the Almgren-Chriss (AC) strategy, i.e., $\bmQ_t = \bmQ^{AC}_t$ where $\bmQ^{AC}_t$ is the AC liquidation position given by integrating \eqref{eq:optimal_control AC} with penalty parameter $\phi^{AC} = 0.1$.\footnote{The penalty parameter is embedded in  $\bC$ which appears explicitly in the liquidation speed \eqref{eq:optimal_control AC}.} With these parameters, the AC strategy liquidates more than the initial inventory of SMH early on, but repurchases inventory by the trading end. The switching of trading direction is due to the correlation between assets, which causes the trader to take on a hedge-like position to reduce risk.
\item \underline{Almgren-Chriss  liquidation} (AC): the strategy is as in \eqref{eq:optimal_control AC} and the price process is \eqref{eq:price_dynamics_AC},  so the strategy only uses information from INTC and SMH without a co-integrating factor. This is the benchmark we employ to compare the results of the previous three strategies.

\end{enumerate}

\noindent\textbf{ Scenarios.} We simulate $10^6$ sets of sample price paths and look at the performance of the four strategies when liquidating shares in INTC and SMH for a range of values of the penalty parameter $\phi=10^{-2}\times \left\{  0.50\,,   0.54\,,    0.5833\,,    0.63\,,    0.6804\,,    0.7349\,,    0.7937\,,    0.8572\,,    0.9259\,,    1\right\}$, and the liquidation penalty is $\alpha=10^6$ for both assets, i.e., employ strategies that guarantee full liquidation.  We measure the performance by comparing the terminal wealth of UL, RL, ULT, and AC under two scenarios:
\begin{itemize}
  \item Scenario 1. Liquidate shares in  INTC and SMH and  employ the additional information provided by three additional assets: FARO, NTAP, ORCL.

  \item Scenario 2. Liquidate shares in  INTC and SMH and only employ the information provided by the dynamics of INTC and SMH.

\end{itemize}

\begin{figure}[!t]
\begin{center}
\subfigure[full information]
{\includegraphics[width = 0.4\textwidth]{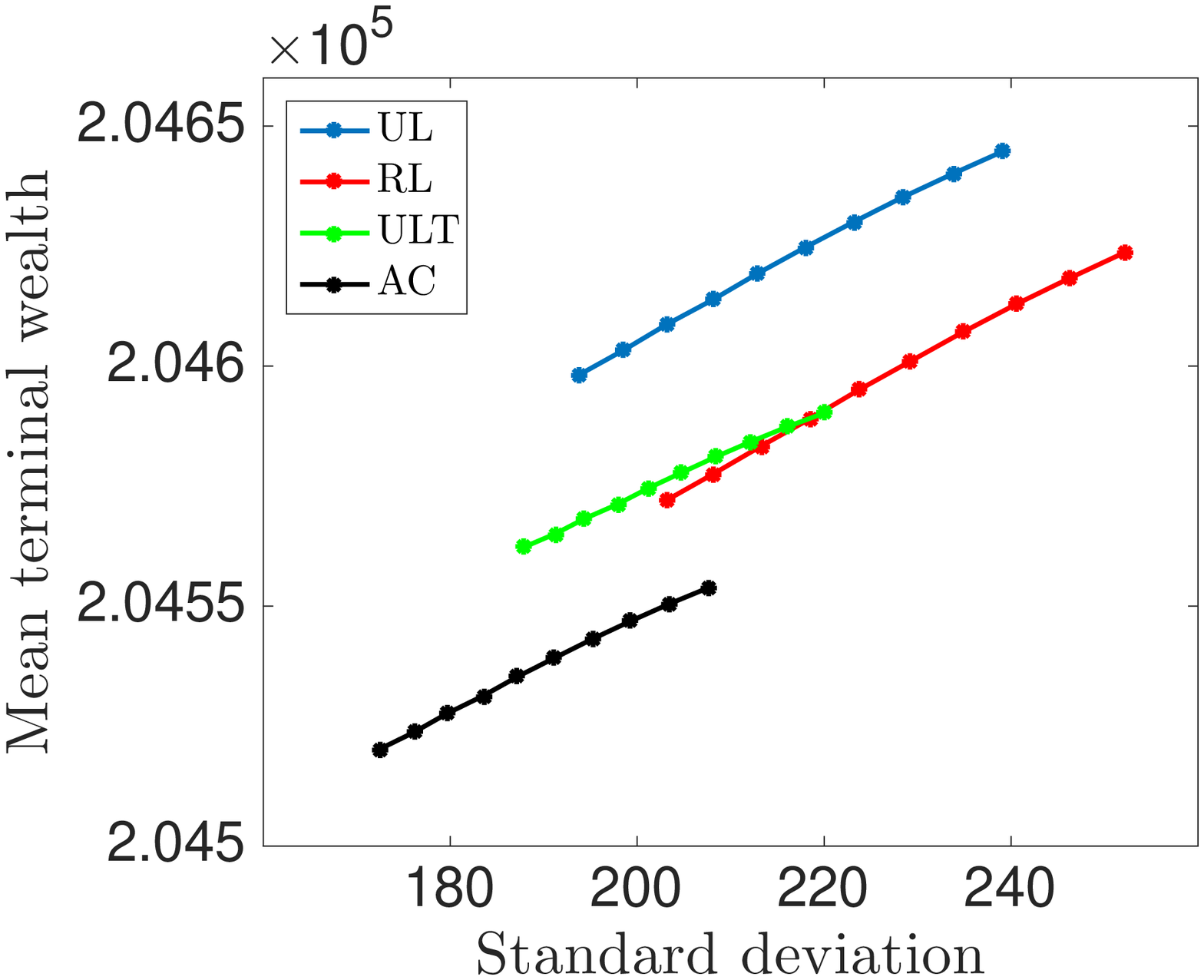}}
\subfigure[partial information]
{
\includegraphics[width=0.4\textwidth]{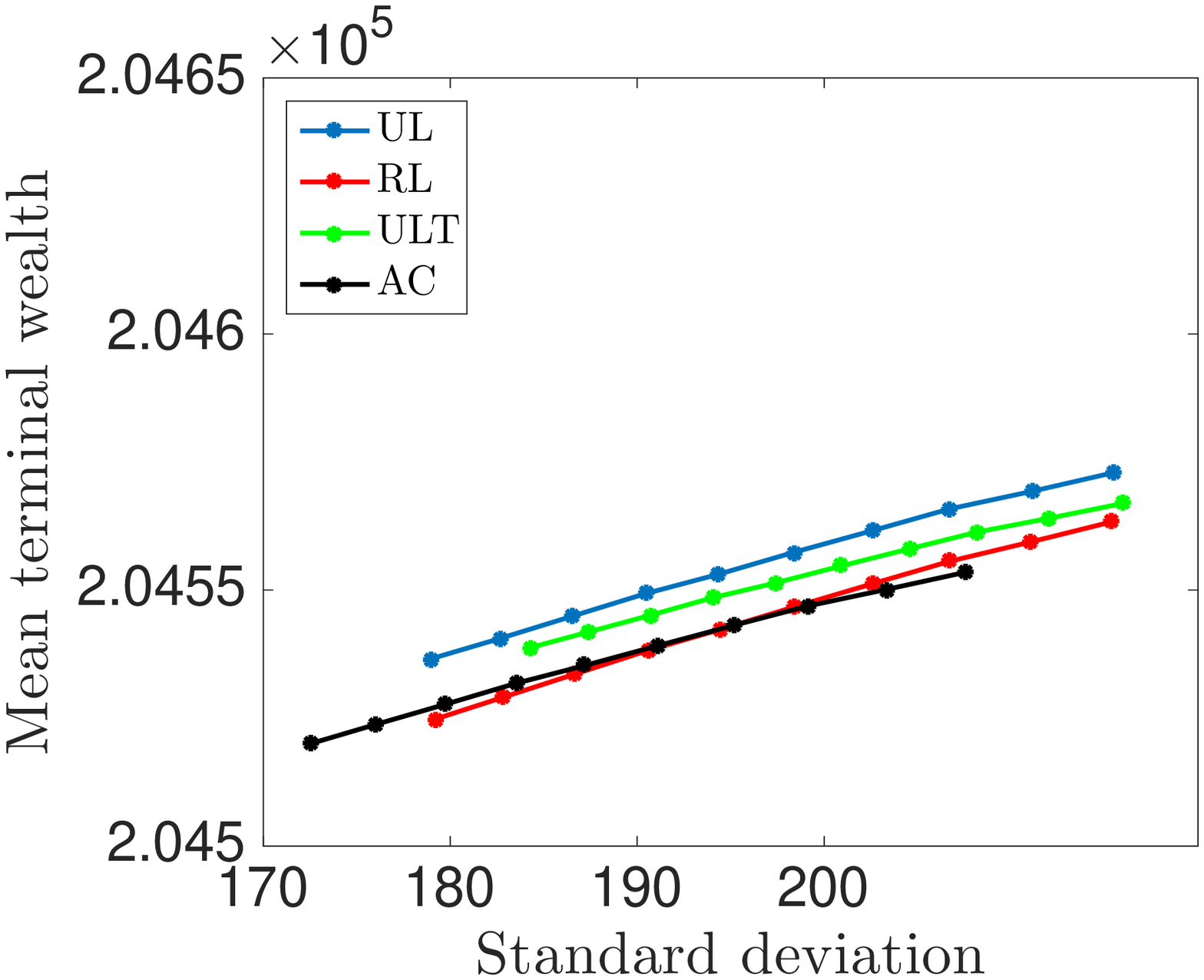}
}
\end{center}
\vspace{-1.5em}
\caption{Trading INTC and SMH.	Risk-reward  for  UL, RL, ULT, and AC. Left (right) panel, strategies employ information from all (only the traded) stocks. Within each panel, the penalty $\phi$ increases moving from the right to the left of the diagrams. \label{figure:risk_reward_all}}
\end{figure}

\subsubsection{Investor estimates model parameters without error}\label{subsubsec: correct parameter estimates}

Figure \ref{figure:risk_reward_all} shows the mean terminal wealth (aggregate cash from liquidating shares in both INTC and SMH) of the four strategies as a function of its standard deviation. As the penalty parameter $\phi $ increases, the standard deviation and mean of the terminal wealth decrease. To see the intuition behind this relationship  let us focus on UL. The agent targets an inventory of zero throughout the life of the strategy, and the value of $\phi$ determines how closely  the strategy tracks this target. When the  penalty is high, the strategy is less able to trade strategically by either speculating (repurchasing shares) and/or taking advantage of midprice signals that stem from the co-integrating factor. Thus, potential benefits from taking advantage of price movements are outweighed by the requirement that inventory must be drawn to zero very quickly. Conversely, as  the penalty  becomes  smaller, the strategy will have more opportunities to anticipate and take advantage of midprice movements and these will not be curbed by a strict inventory target.

The left-hand panel of the figure shows Scenario 1 where  UL, RL, and ULT employ the additional information  provided by FARO, NTAP, ORCL. Clearly, UL dominates the other strategies where AC is the worst performer because it does not account for the co-integration of assets. The right-hand panel of Figure \ref{figure:risk_reward_all} shows Scenario 2 where only information of the co-integrated pair INTC and SMH is employed. Clearly,  not employing the additional information provided by other assets that are co-integrated with those in the liquidating portfolio has a considerable effect on the strategies' performance.

\begin{figure}[t!]

\begin{center}
\subfigure[INTC: full information]
{\includegraphics[width=0.4\textwidth]{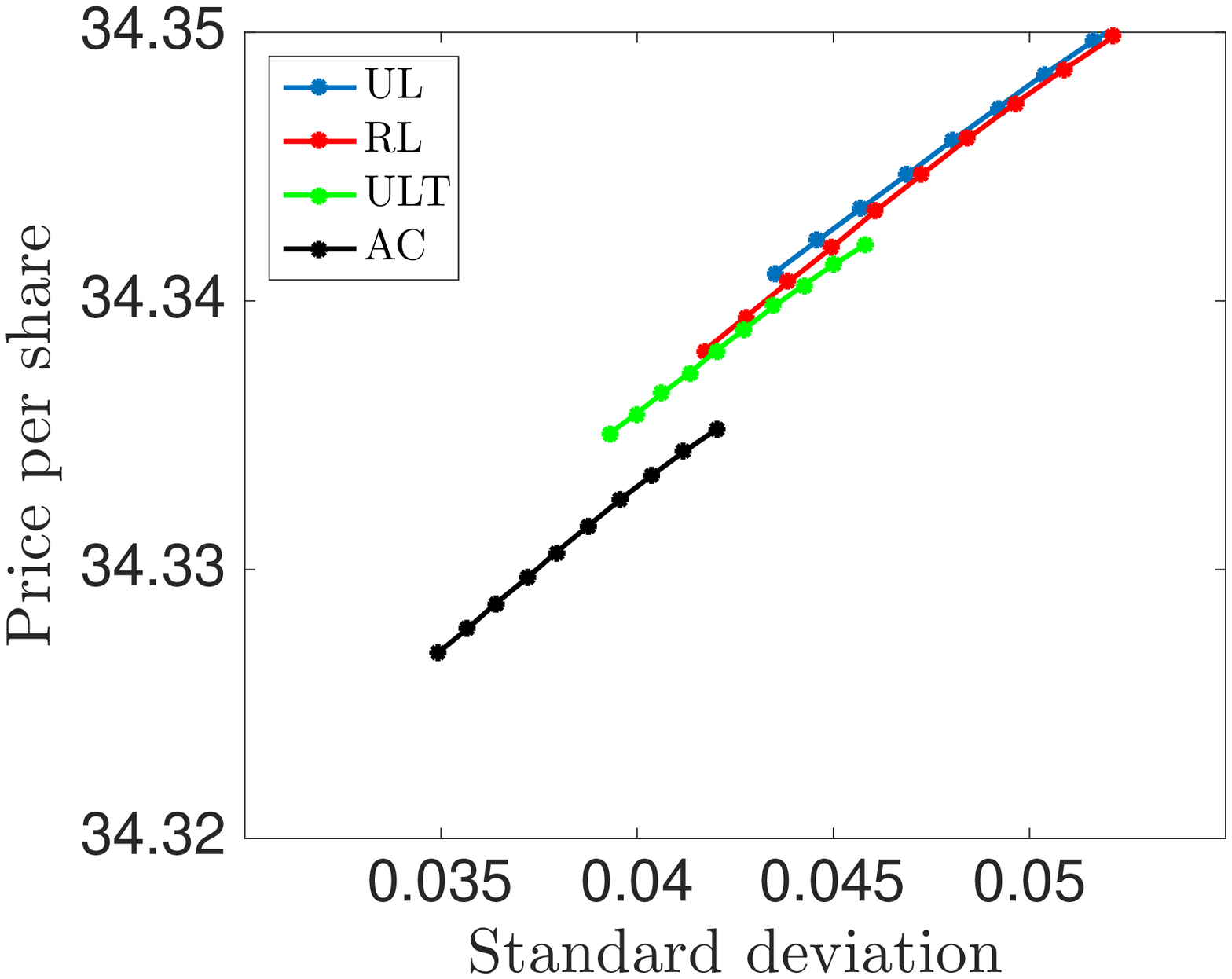}}
\subfigure[INTC: partial information]
{\includegraphics[width=0.4\textwidth]{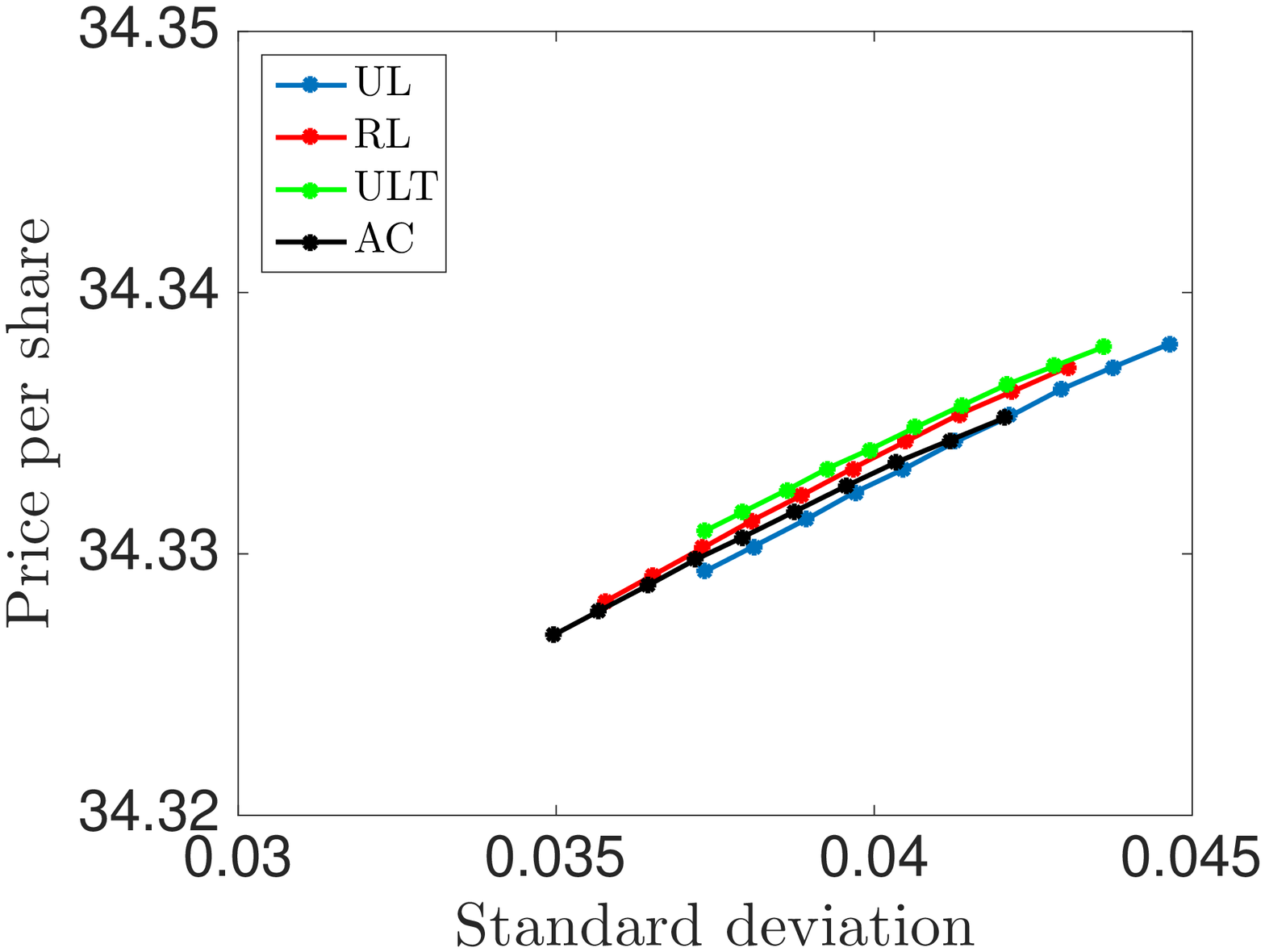}}
\subfigure[SMH: full information]{	\includegraphics[width=0.4\textwidth]{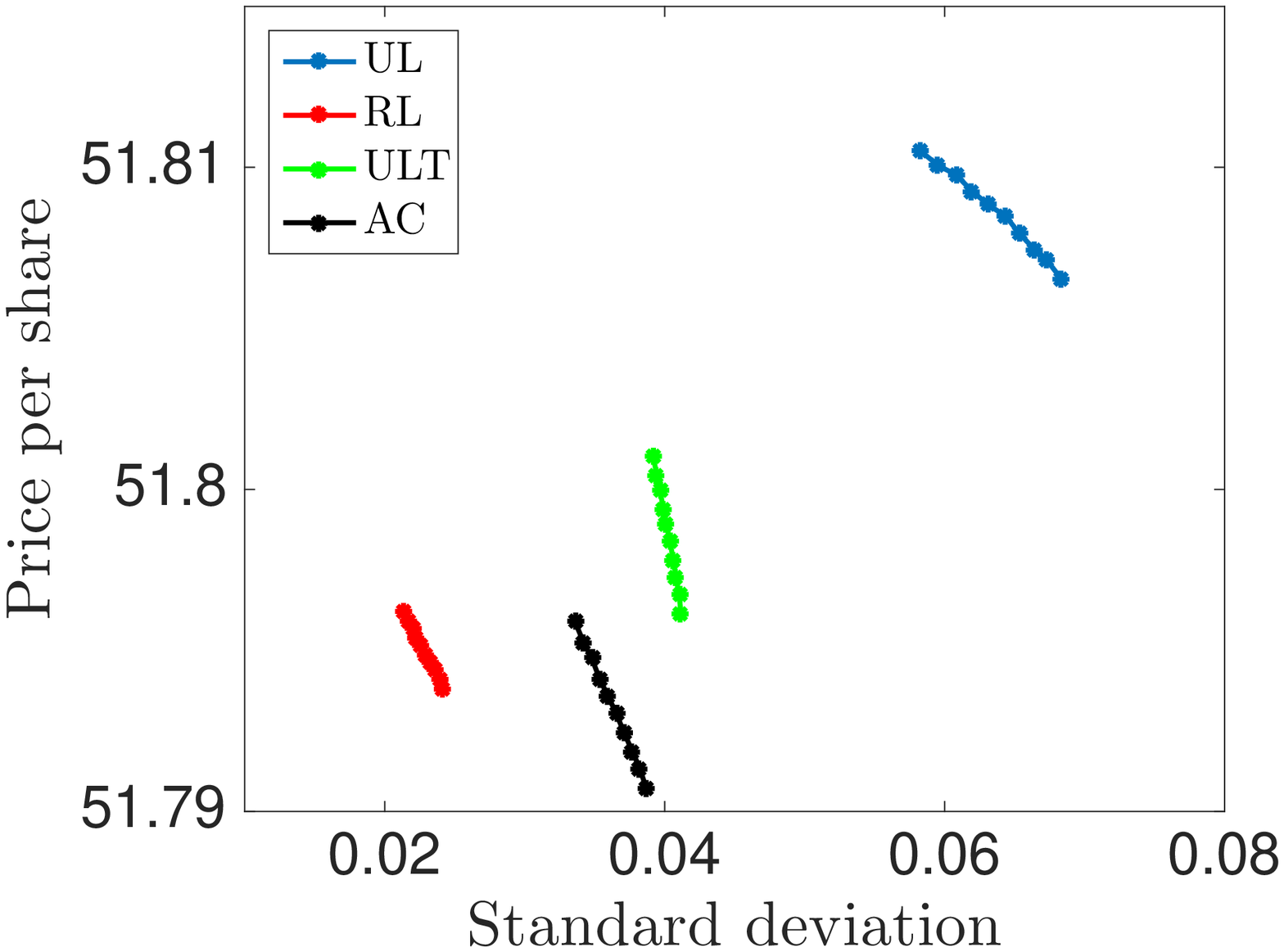}}
\subfigure[SMH: partial information]
{\includegraphics[width=0.4\textwidth]{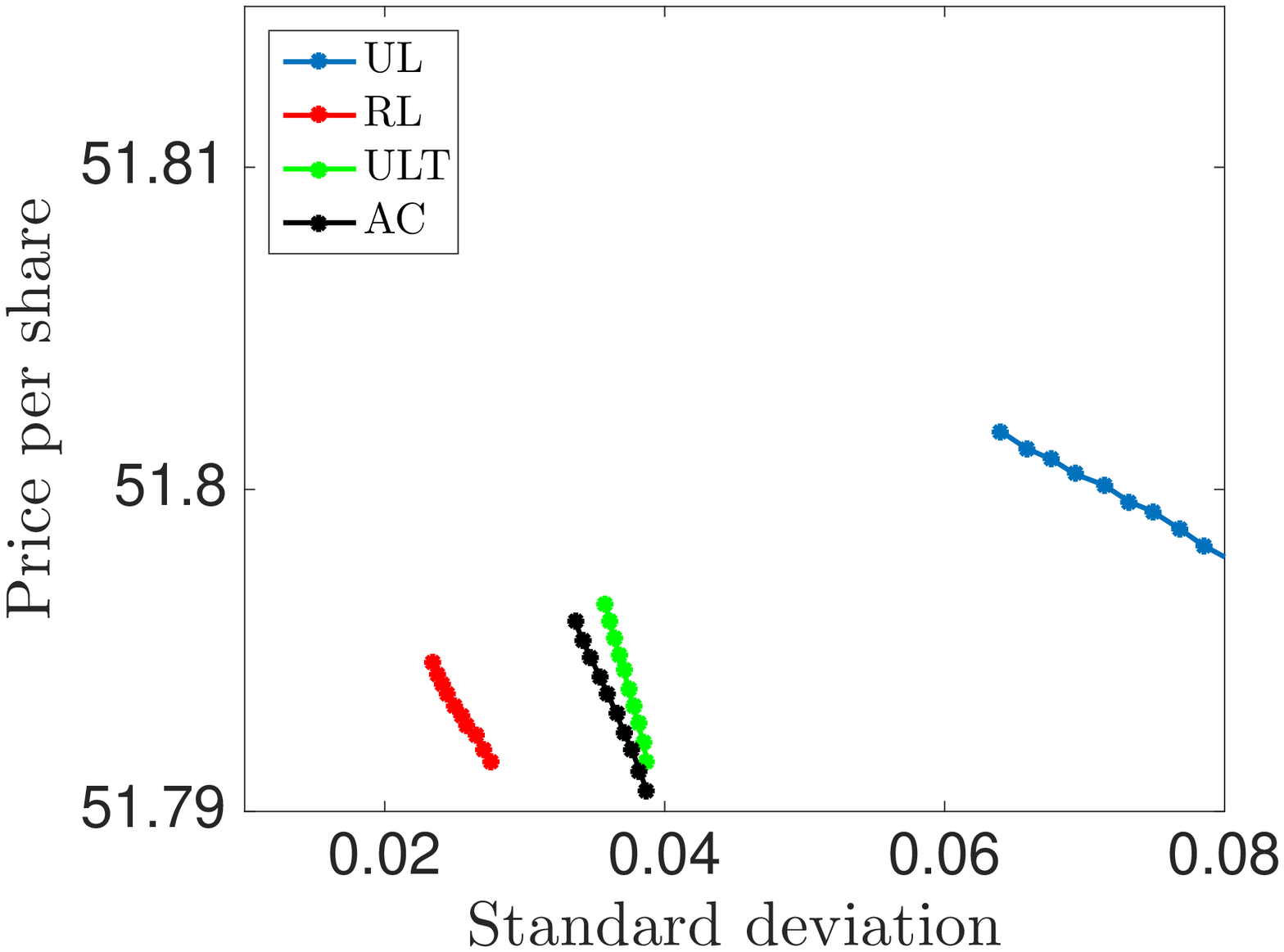}}
\end{center}
\vspace{-1.5em}
\caption{Price per share of INTC and SMH for UL, RL, ULT, and AC. Left (right) panels, strategies employ information from all (only the traded) stocks. Within each panel, the penalty $\phi$ increases moving from the right to the left of the diagrams.\label{figure:risk_reward INTC_SMH}}
\end{figure}
Figure  \ref{figure:risk_reward INTC_SMH}  shows the mean price per share for INTC and SMH, respectively, for a range   of values of the parameter $\phi$. For both shares, the figures in the left-hand panels show that including information from other co-integrated assets boosts the performance of UL. The right-hand panel shows that UL is more volatile than the other strategies and this is a result of the strategy speculating on repurchases of the assets.

Table \ref{tab:summ stats UL RL} shows, for different values of the target penalty parameter $\phi$, how often UL repurchases shares and the percentage of times that UL and RL underperform AC. Here  UL and RL employ information of the midprice dynamics of the five co-integrated assets.  The table   shows that UL's speculative component ranges from 13\% to 18\% in INTC and 63\% to 65\% in SMH.  UL's speculative trades are similar to those employed in pairs trading, which take advantage of temporary deviations of prices. Moreover, we observe that very seldom do we see UL underperform AC, whereas RL underperforms in around 13\% to 14\% of the runs. Recall, however, that the optimal strategy  we derived is the UL strategy, while RL is an ad-hoc sub-optimal adjustment that precludes asset repurchases.
\begin{table}[h!]
\centering \footnotesize
\begin{tabular}{l|l|rrr|rrr} \hhline{========}
\multicolumn{2}{c|}{Strategy}                             & \multicolumn{3}{c|}{UL} & \multicolumn{3}{c}{RL} \\ \hline
\multicolumn{2}{c|}{$\phi$}                                 & 1E-2    & 7.3E-3  & 5E-3  & 1E-2    & 7.5E-3  & 5E-3  \\ \hline
\multicolumn{2}{l|}{$\% \nu_{INTC} < 0$}                  & 18.7   & 16.3     &  13.0      & 0     & 0     & 0      \\
\multicolumn{2}{l|}{$\%\nu_{SMH} < 0$}                     & 64.8   & 64.5  &  63.3  & 0    & 0     & 0      \\
\multicolumn{2}{l|}{$\% X_T < X_T^{AC}$}                    & 0.2    & 0.4   & 0.9    & 14.2  & 13.9  & 13.4     \\
\hhline{========}
\end{tabular}
\caption{Repurchase frequency for UL, and underperformance of UL and RL with respect to AC.}\label{tab:summ stats UL RL}
\end{table}

Furthermore, as the value of the parameter $\phi$ decreases, there are fewer  instances in which the liquidation speeds for INTC and SMH are negative. At first this might seem counterintuitive, for one expects a more relaxed penalty parameter to allow UL more freedom to speculate. Note however, that a high value of $\phi$ (recall that for UL the inventory-target is $\bmQ_t = \bzero$ for all $t$)  pushes the inventory close to zero early. And once the inventory in both assets is low, the strategy  attempts more speculative trades by repurchasing the asset. These speculative trades are small in  volume, but frequent.

Figure \ref{figure:savings_all} compares the performance of UL and RL with that of AC. The comparison is in basis points according to the commonly used metric
\begin{equation}\label{eqn: bps}
	\text{Savings}^{j} = \frac{X^{j}_{T} - X^{AC}_T}{X^{AC}_{T}}\times 10^4\,,
\end{equation}
where $X^{j}_{T}$ is the terminal cash\footnote{Recall that we have chosen a very large terminal penalty, so that inventory paths end at zero, and hence the terminal cash the agent has equals her wealth from liquidating the shares.} obtained from liquidating the two-asset portfolio employing strategy  $j \in \{\text{UL},\, \text{RL}\}$. In the left-hand panel the strategy employs information from the price dynamics of the five co-integrated assets. For UL, savings are in the order of 4 to 4.5 basis points, and for RL between 2.5 and 3.5 basis points. In the   right-hand panel,  only information provided by the midprice dynamics of the two-asset portfolio is employed, so  as expected,  the savings are lower.
\begin{figure}[t!]
\begin{center}
\subfigure[full information]
{
\includegraphics[width=0.4\textwidth]{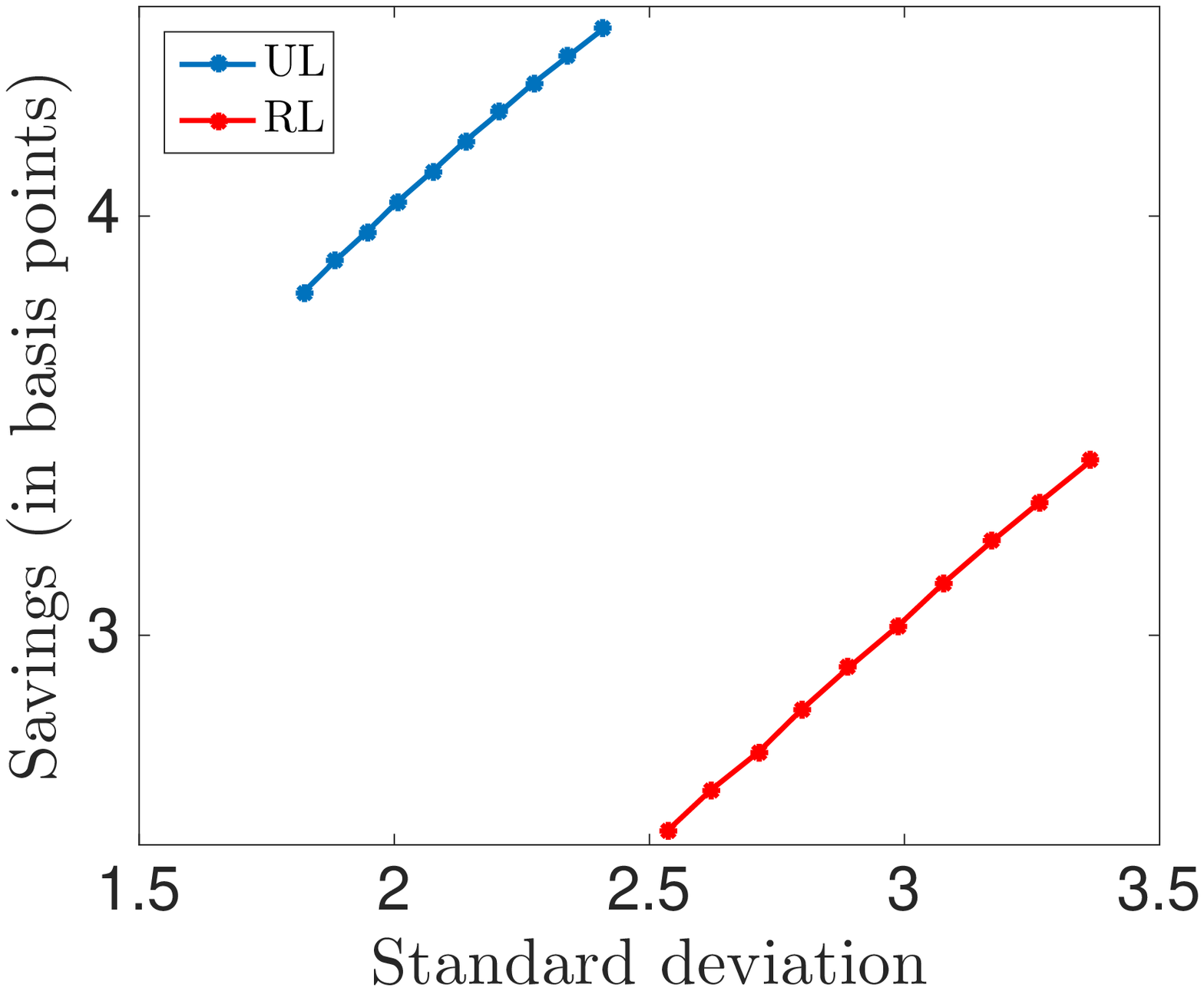}
}
\subfigure[partial information]
{
	\includegraphics[width=0.4\textwidth]{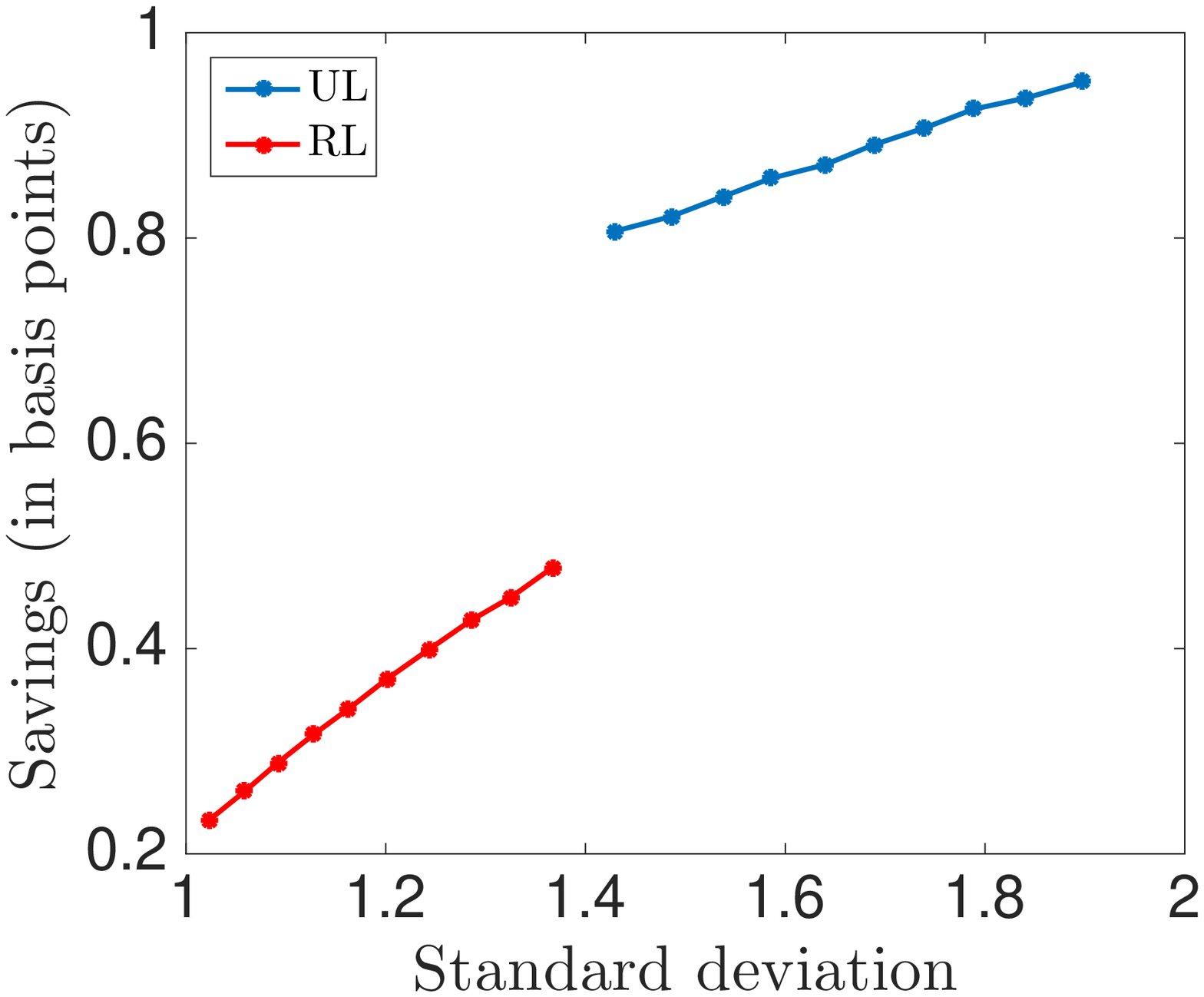}
}
\end{center}
\vspace{-1.5em}
\caption{UL and RL savings in basis points  compared to AC. Left (right) panel, strategies employ information from all (only the traded) stocks. Within each panel, the penalty $\phi$ increases moving from the right to the left of the diagrams. \label{figure:savings_all}}
\end{figure}

Finally, Table \ref{tab: basis points} shows the quantiles of performance of UL and RL measured using \eqref{eqn: bps} for a range of the penalty parameter $\phi$.  The strategies use the information of the five co-integrated assets.
\begin{table}[h!]
	\centering \footnotesize
	\begin{tabular}{l|l|rrr|rrr} \hhline{========}
		\multicolumn{2}{c|}{Strategy}     & \multicolumn{3}{c|}{UL} & \multicolumn{3}{c}{RL} \\ \hline
		\multicolumn{2}{c|}{$\phi$}       & 1E-3    & 7.5E-4  & 5E-4  & 1E-3    & 7.5E-4  & 5E-4 \\ \hline
		\multirow{5}{*}{\rotatebox[origin=c]{90}{quantile}}
		& 5\%  & 1.15 & 1.25 & 1.27 & -1.54 & -1.34 & -1.21 \\
		& 25\%  & 2.77 & 2.60 & 2.55 & 1.16 & 0.96 & 0.84 \\
		& 50\%  & 4.11 & 3.81 & 3.60 & 3.05 & 2.64 & 2.29 \\
		& 75\%  & 5.73 & 5.25 & 4.85 & 5.28 & 4.54 & 4.01 \\
		& 95\%  & 8.86 & 7.80 & 7.18 & 9.57 & 7.94 & 7.13 \\
		\hhline{========}
	\end{tabular}
	\caption{Quantiles of relative savings, measured in basis points using \eqref{eqn: bps}.}\label{tab: basis points}
\end{table}

\subsubsection{Investor estimates model parameters with error}\label{subsubsec: incorrect parameter estimates}

Here we assume that the investor does not have access to enough trading data, so the parameter estimates she obtains are incorrect. The investor observes prices for one day for each asset, which she employs to calibrate the model. The prices she observes are simulated using the parameters in Tables  \ref{table:est_kappa_5S60S} and   \ref{table:est_Sigma_5S60S}. From the observed data, the investor samples  prices every minute to estimate parameters, which are reported in Tables   \ref{table:est_kappa_5S60S_with_error} and  \ref{table:est_Sigma_5S60S_with_error}. Moreover, 	 we  use the same set of prices to estimate the coefficients for the benchmark AC strategy -- parameter estimates are reported in Table \ref{table:est_Sigma_AC_with_error}.

To illustrate how the strategy performs when the model parameters are incorrect, we first proceed as above. Then, we simulate $10^6$ sets of sample price paths (using the original parameters -- so that the agent has incorrect parameters in their trading strategy) and look at the performance of the four strategies when liquidating shares in INTC and SMH, and proceed as in Subsection \ref{subsec: liquidation of portfolio with two assets}. The results are broadly the same as those obtained in the previous section when the investor's parameter estimates were the same as those used to simulate price paths. For example, Table \ref{tab: basis points_with_error}   shows quantiles of relative savings, measured in basis points using \eqref{eqn: bps}, when parameters are estimated \textbf{with error}. The results in the table are similar to those show above in Table \ref{tab: basis points} when the investor estimated the parameters of the model without error. Finally, we do not present the analogues to the figures shown above because the results are qualitatively the same.

\begin{table}[h!]
	\centering \footnotesize
	\begin{tabular}{l|l|rrr|rrr} \hhline{========}
		\multicolumn{2}{c|}{Strategy}     & \multicolumn{3}{c|}{UL} & \multicolumn{3}{c}{RL} \\ \hline
		\multicolumn{2}{c|}{$\phi$}       & 1E-2    & 7.5E-3  & 5E-3  & 1E-2    & 7.5E-3  & 5E-3 \\ \hline
		\multirow{5}{*}{\rotatebox[origin=c]{90}{quantile}}
		& 5\%  & 0.47 & 0.77 & 1.05 & -2.12 & -1.90 & -1.72 \\
		& 25\%  & 2.90 & 2.82 & 2.80 & 1.23 & 0.97 & 0.87 \\
		& 50\%  & 4.74 & 4.45 & 4.21 & 3.53 & 3.05 & 2.71 \\
		& 75\%  & 6.92 & 6.40 & 5.92 & 6.20 & 5.38 & 4.73 \\
		& 95\%  & 11.06 & 9.91 & 8.94 & 11.21 & 9.58 & 8.28 \\
		\hhline{========}
	\end{tabular}
	\caption{Quantiles of relative savings, measured in basis points using \eqref{eqn: bps}, when parameters are estimated \textbf{with error}.}
	\label{tab: basis points_with_error}
\end{table}

\section{Conclusions}\label{sec: concs}

We show how to liquidate a basket of assets whose prices are co-integrated. In our framework, market orders from all participants, including the agent liquidating the basket, have a permanent impact  on asset prices. In addition, the agent receives prices that are worse than the best quotes because her trades  walk the limit order book, i.e., have temporary price impact.  We assume that price impact is linear in the speeds of trading and order flow has  cross-effects: trade activity in one asset may have a permanent effect on prices of co-integrated assets and a temporary effect on the limit order books that display the liquidity of the co-integrated assets.

The agent maximizes terminal wealth and targets an inventory schedule. The liquidation strategy employs information from $n$ co-integrated assets and liquidates a basket consisting of a subset of  $m\leq n$ assets. We estimate the model parameters and co-integration factors using trade data from five stocks (INTC, SMH, FARO, NTAP, and ORCL) in the Nasdaq exchange. The agent's basket consists of 4,600 shares in INTC and 900 shares in SMH. We compare the performance of the strategy, under various assumptions,  to that of AC where  the agent models the correlation between the assets in the basket, but  does not model co-integration or employ additional information from other assets.

Our simulations of the liquidation program  show that additional information from other co-integrated stocks considerably boosts the performance of the strategy. For example,  if the level of urgency required by the agent to liquidate the portfolio is high (resp. low)  the strategy outperforms  AC by 4 (resp. 4.5) basis points.  This improvement over AC is due to the quality of the information provided by the co-integrated assets, and  due to a speculative component of the strategy which allows the agent to repurchase shares during the liquidation horizon to take advantage of price signals. If the agent is not allowed to speculate, i.e., cannot repurchase shares, the relative savings compared to AC, depending on the level of urgency, are between 2.5 to 3.5 basis points.

\clearpage

\appendix

\section{Proofs}

\subsection{Proof of Proposition \ref{prop:solving_DPE}}\label{sec:proof solve DPE}
\begin{proof}
	Substituting  the ansatz (\ref{eq:ansatz1}) into (\ref{eq:HJB_equation}),
	we see that $\mathcal{L^\nu}H$ can be simplified to
	\begin{equation}
\begin{split}
\mathcal{L}^{\nu}H =&	\bnu^\intercal \,\ba\, \bnu - \bnu^\intercal \,\Big( \left(\bE^\intercal - \bX \right)\,\bz + 2\,\bC\,\bbq + \bD \Big)
\\
& + \bmu^\intercal\,\bb^\intercal\,\bX^\intercal\,\bbq -\bz^{\intercal}\left(\bkappa\,\bA + \bA\,\bkappa^\intercal \right) \bz
 		- \bz^\intercal\,\bkappa \left(\bE\,\bbq+\bB\,\right)+\text{Tr}\left(\bSigma^{u}\bA\right)\,.
 \end{split}
\label{eqn:all_terms_with_nu}
\end{equation}
The supremum of (\ref{eqn:all_terms_with_nu}) is achieved at
\begin{eqnarray}
		\bnu^{*} & = & -\tfrac{1}{2}\ba^{-1} \left( 2\,\bC\,\bbq+
			\left(\bE^{\intercal} - \bX \right)\,\bz + \bD \right)\,.			
		\label{eq:optimal_trading_rate}
\end{eqnarray}
Substituting $\bnu^{*}$ into (\ref{eq:HJB_equation}) we obtain the following equality
\begin{eqnarray*}
 0 & = & \bz^\intercal\, \dot{\bA}\, \bz + \bz^\intercal\,\left(\dot\bB + \mathcal{L}^\bmu\bB\right) + \bbq^\intercal \,\dot{\bC} \,\bbq + \bbq^\intercal \,\left(\dot\bD + \mathcal{L}^\bmu\,\bD\right) + \bz^\intercal \dot{\bE} \,\bbq + \dot{F} + \mathcal{L}^\bmu F \\
 & & - \phi \,\left(\bbq - \bmQ_t\right)^\intercal \,\tilde{\bSigma} \,\left(\bbq - \bmQ_t\right) -\bz^{\intercal}\left(\bkappa\,\bA + \bA\,\bkappa^\intercal \right) \bz
 - \bz^\intercal\,\bkappa \left(\bE\,\bbq+\bB\,\right) + \text{Tr}\left(\bSigma^{u}\,\bA\right) \\
  & &+ \frac{1}{4} \left( 2\,\bC\,\bbq+
  \left(\bE^{\intercal} - \bX \right)\,\bz + \bD \right)^\intercal \,\ba^{-1}\, \left( 2\,\bC\,\bbq+
  \left(\bE^{\intercal} - \bX \right)\,\bz + \bD \right) \,.
\end{eqnarray*}
Matching the coefficients for $\bz^\intercal (\cdot) \bz$, $(\cdot)^\intercal \bz$ $\bbq^\intercal (\cdot) \bbq$, $(\cdot)^\intercal \bbq$, $\bz^\intercal (\cdot) \bbq$ and the constant, and stacking $\bA$, $\bC$ and $\bE$ we obtain the system of matrix Riccati equations  \eqref{eq:matrix_riccati_block_form} and the linear PDEs (\ref{eq:matrix_diff_equation}).
\qed
\end{proof}

\subsection{Proof of Theorem \ref{thm:bounded}}\label{sec:Proof bounded}

In this subsection we show that the solution to matrix Riccati equation (\ref{eq:matrix_riccati_block_form}) remains bounded on $[0,T]$. To show this, we require two intermediate results.

We first state the following comparison theorem (for a proof, see Theorem 2.2.2 in \cite{kratz2011optimal}).
\begin{theorem} \label{thm:comparison_riccati_equation}
	Let $\bmL_1(t)$, $\bmL_2(t)$, $\bmM(t)$, $\bmN_1(t)$, $\bmN_2(t)$ $\in \mathbb{R}^{d \times d}$ be piecewise continuous on $\mathbb R$.
	Moreover, suppose $\bmL_1(t)$, $\bmL_2(t)$, $\bmN_1(t)$, $\bmN_2(t)$ ($t \in \mathbb{R}$) and $\bmS_1$, $\bmS_2$ $\in \mathbb{R}^{d\times d}$
	are symmetric. Let $T > 0$ and
	\begin{equation*}
	\bmS_1 \geq \bmS_2, \qquad \bmL_1 \geq \bmL_2 \geq \bzero, \qquad  \bmN_1 \geq \bmN_2\,,
	\end{equation*}
	on $[0, T]$. Assume that the terminal value problem
	\begin{equation*}
	\dot{\bH_1} + \bH_1 \bmL_1 \bH_1 + \bmM \bH_1 + \bH_1 \bmM + \bmN_1 =  0\,, \qquad \bH_1(T) = \bmS_1\,,
	\end{equation*}
	has a solution $\bH_1$ on $[0, T]$. Then the terminal value problem
	\begin{equation*}
	\dot{\bH_2} + \bH_2 \bmL_2 \bH_2 + \bmM \bH_2 + \bH_2 \bmM + \bmN_2 =  0\,, \qquad \bH_2(T) = \bmS_2\,,
	\end{equation*}
	has a solution $\bH_2$ on $[0, T]$ and $\bH_1(t) \geq \bH_2(t)$ for all $t \in [0, T]$.
\end{theorem}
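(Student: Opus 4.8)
The plan is to linearise the difference of the two Riccati flows, read off a \emph{congruence} representation that manifestly preserves the L\"owner order, and obtain the global existence of $\bH_2$ from the a priori bounds that the comparison itself produces. First I would record that, since $\bmL_1,\bmL_2,\bmN_1,\bmN_2$ and the terminal data are symmetric and the drift enters as $\bmM^\intercal(\cdot)+(\cdot)\bmM$ (the form of \eqref{eq:matrix_riccati_block_form}), the flow preserves symmetry, so any solution $\bH_i$ is symmetric wherever it exists. As the right-hand side is quadratic, hence locally Lipschitz, the terminal value problem for $\bH_2$ has a unique maximal solution on some interval $(t^\ast,T]$; the content of the statement is that $t^\ast=0$ and that $\bH_1\ge\bH_2$ there.

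On any subinterval where both solutions exist I would study $\bDelta:=\bH_1-\bH_2$. Subtracting the two equations and using the symmetric splitting
\[
\bH_1\,\bmL_1\,\bH_1-\bH_2\,\bmL_2\,\bH_2
=\tfrac12(\bH_1+\bH_2)\,\bmL_1\,\bDelta+\tfrac12\bDelta\,\bmL_1(\bH_1+\bH_2)+\bH_2(\bmL_1-\bmL_2)\bH_2\,,
\]
the difference solves a linear Lyapunov/Sylvester-type equation
\[
\dot\bDelta=-\boldsymbol K^\intercal\,\bDelta-\bDelta\,\boldsymbol K+\boldsymbol R\,,\qquad \boldsymbol K:=\tfrac12\,\bmL_1(\bH_1+\bH_2)+\bmM\,,
\]
with source $\boldsymbol R:=-\bH_2(\bmL_1-\bmL_2)\bH_2-(\bmN_1-\bmN_2)\le\bzero$ and terminal value $\bDelta(T)=\bmS_1-\bmS_2\ge\bzero$, where I use $\bmL_1\ge\bmL_2$, $\bmN_1\ge\bmN_2$ and the symmetry of $\bmL_1,\bH_2$.

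Introducing the fundamental matrix $\boldsymbol\Phi$ with $\dot{\boldsymbol\Phi}=\boldsymbol K\,\boldsymbol\Phi$, $\boldsymbol\Phi(T)=\bI$, a direct computation gives $\tfrac{d}{dt}\big(\boldsymbol\Phi^\intercal\bDelta\,\boldsymbol\Phi\big)=\boldsymbol\Phi^\intercal\boldsymbol R\,\boldsymbol\Phi$, and integrating from $t$ to $T$,
\[
\boldsymbol\Phi(t)^\intercal\,\bDelta(t)\,\boldsymbol\Phi(t)=(\bmS_1-\bmS_2)+\int_t^T\boldsymbol\Phi(s)^\intercal\big(-\boldsymbol R(s)\big)\boldsymbol\Phi(s)\,ds\ \ge\ \bzero\,,
\]
because each integrand is a congruence of a positive semidefinite matrix. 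Since $\boldsymbol\Phi(t)$ is invertible, conjugating back yields $\bDelta(t)\ge\bzero$, i.e.\ $\bH_1(t)\ge\bH_2(t)$, on the common interval.

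Finally I would upgrade this to global existence of $\bH_2$. The inequality just proved gives the upper bound $\bH_2\le\bH_1$, and $\bH_1$ is bounded on $[0,T]$. For a lower bound I would run the same congruence comparison between $\bH_2$ (coefficient $\bmL_2\ge\bzero$) and the solution $\bH_0$ of the purely linear equation obtained by dropping the quadratic term (coefficient $\bzero\le\bmL_2$, with the same $\bmM,\bmN_2,\bmS_2$), which exists and is bounded on all of $[0,T]$; this yields $\bH_2\ge\bH_0$. The resulting two-sided bound shows that $\|\bH_2\|$ cannot blow up as $t\downarrow t^\ast$, so the maximal interval is all of $[0,T]$. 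I expect the no-blow-up step to be the only delicate point: the inequality itself is the easy congruence identity, whereas extending $\bH_2$ to the full interval requires the a priori bounds produced by two applications of that identity, which must be set up carefully on the maximal interval first and only then used to conclude $t^\ast=0$.
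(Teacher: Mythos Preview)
Your argument is correct and is in fact the standard route to Riccati comparison results. The paper does not supply its own proof of this theorem; it simply cites Theorem~2.2.2 of \cite{kratz2011optimal}. So there is nothing to compare against line by line, but your congruence/Lyapunov approach---linearising $\bDelta=\bH_1-\bH_2$, writing $\tfrac{d}{dt}(\boldsymbol\Phi^\intercal\bDelta\,\boldsymbol\Phi)=\boldsymbol\Phi^\intercal\boldsymbol R\,\boldsymbol\Phi$ with $\boldsymbol R\le\bzero$, and then bootstrapping global existence via the two-sided bound $\bH_0\le\bH_2\le\bH_1$---is exactly how such results are typically established, and would be fully acceptable here.

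Two small remarks. First, you are right that the drift in the statement should be read as $\bmM^\intercal\bH+\bH\,\bmM$ (matching \eqref{eq:matrix_riccati_block_form}); otherwise symmetry is not preserved for non-symmetric $\bmM$, and this is how the theorem is actually applied in the paper with $\bM_2$ non-symmetric. Second, your no-blow-up step is fine as written: on the maximal interval $(t^\ast,T]$ both $\bH_1$ and the linear solution $\bH_0$ already exist on all of $[0,T]$, so the congruence identity can be run twice there to produce the L\"owner sandwich, and the spectral bound $\lambda_{\min}(\bH_0)\le\lambda(\bH_2)\le\lambda_{\max}(\bH_1)$ then controls $\|\bH_2\|$ uniformly, forcing $t^\ast<0$.
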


From the theorem above, we can show the existence of solution to (\ref{eq:matrix_riccati_block_form}) by
bounding it by another matrix Riccati differential equation, for which the solution is bounded.
The candidate we consider is
\begin{equation}
\dot{\bH} + \bH \bM_1 \bH + \bH \bM_2 + \bM_2^\intercal \bH + \tilde{\bM_3} = \bzero\,,
\label{eq:bound_for_matrix_riccati}
\end{equation}
with terminal condition
\[
\bH(T) = \left[\begin{smallmatrix} \bzero & \bzero \\ \bzero & \bX\bb\bX^\intercal -2 \balpha \end{smallmatrix}\right],
\]
where $\bM_1$ and $\bM_2$ are given by (\ref{eq:matrix riccati block M}),
\[
\tilde{\bM_3} =    \left[\begin{smallmatrix} \gamma^{max} \bI_n & \bzero \\
\bzero & \bzero \end{smallmatrix}\right],
\]
and $\gamma^{max}$ is the largest eigenvalue of the matrix $\frac{1}{2\phi} \,\bkappa \,\bX^\intercal \,\tilde{\bSigma}^{-1} \,\bX \,\bkappa^\intercal$.

The following theorem explicitly characterize the solution of (\ref{eq:bound_for_matrix_riccati}).
\begin{theorem}
	\label{thm:solution_riccati_bound}
	Suppose $ \balpha - \frac{1}{2} \,\bX\,\bb\,\bX^\intercal $ is positive definite, the matrix Riccati differential equation (\ref{eq:bound_for_matrix_riccati}) admits the solution:
	\[
	\bH =   \left[ \begin{smallmatrix} \bH^{11} & \bzero \\ \bzero & \bH^{22} \end{smallmatrix}\right],
	\]
	where $\bH^{11}$ is given by
	\begin{equation}
	\bH^{11} (t) = \gamma^{max}\int_t^T e^{\bkappa(t-u)} e^{\bkappa^\intercal(t-u)} du \,,
	\label{eq:solution_H_11}
	\end{equation}
	and $\bH^{22}$ is given by
	\begin{equation}
	\bH^{22} (t) = -((T - t)\ba^{-1} + (2\balpha - \bX\bb\bX^\intercal)^{-1}) ^{-1}.
	\label{eq:solution_H_22}
	\end{equation}	
\end{theorem}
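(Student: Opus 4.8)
The plan is to exploit the block structure of the coefficient matrices in \eqref{eq:bound_for_matrix_riccati}. Since $\bM_1$, $\bM_2$, $\bM_2^\intercal$, and $\tilde{\bM_3}$ from \eqref{eq:matrix riccati block M} are all block diagonal, and the terminal datum $\bH(T)$ is block diagonal, I would first argue that the solution stays block diagonal on all of $[0,T]$. Concretely, substituting a symmetric block matrix $\bH=\left[\begin{smallmatrix}\bH^{11}&\bH^{12}\\ (\bH^{12})^\intercal&\bH^{22}\end{smallmatrix}\right]$ into \eqref{eq:bound_for_matrix_riccati} and reading off the $(1,2)$ block yields a \emph{linear} homogeneous ODE for $\bH^{12}$, whose coefficients involve only $\bH^{22}$ and $\bkappa$, together with the zero terminal condition $\bH^{12}(T)=\bzero$. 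By uniqueness of solutions to linear ODEs, $\bH^{12}\equiv\bzero$. Hence the Riccati system decouples into two independent matrix ODEs, one for each diagonal block, and it suffices to solve these and verify the closed forms \eqref{eq:solution_H_11}--\eqref{eq:solution_H_22}.

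For the $(1,1)$ block, the quadratic term $\bH\bM_1\bH$ contributes nothing because the upper-left block of $\bM_1$ vanishes, so $\bH^{11}$ obeys the \emph{linear} Lyapunov equation $\dot{\bH}^{11}-\bH^{11}\bkappa-\bkappa^\intercal\bH^{11}+\gamma^{max}\bI_n=\bzero$ subject to $\bH^{11}(T)=\bzero$. I would verify the candidate \eqref{eq:solution_H_11} directly by differentiating under the integral sign: Leibniz's rule produces a boundary term at $u=t$ equal to $-\gamma^{max}\bI_n$, while differentiating the two matrix exponentials reproduces the linear terms $\bH^{11}\bkappa+\bkappa^\intercal\bH^{11}$, matching the ODE. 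Positive semidefiniteness and boundedness of $\bH^{11}$ are then immediate, since $\gamma^{max}\ge 0$ and the integrand is a continuous positive semidefinite matrix integrated over the compact interval $[t,T]$.

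The $(2,2)$ block carries the nonlinearity: with $\bM_2$ and $\tilde{\bM_3}$ having vanishing lower-right blocks, $\bH^{22}$ solves the purely quadratic Riccati equation $\dot{\bH}^{22}+\bH^{22}\,\bM_1^{22}\,\bH^{22}=\bzero$, where $\bM_1^{22}$ denotes the lower-right block of $\bM_1$, with terminal value $\bH^{22}(T)=\bX\bb\bX^\intercal-2\balpha$. The key device is to linearise by inversion: setting $\boldsymbol\Pi:=-(\bH^{22})^{-1}$ turns the equation into $\dot{\boldsymbol\Pi}=-\bM_1^{22}$, so that $\boldsymbol\Pi$ is \emph{affine} in $t$ and equals $\boldsymbol\Pi(t)=(2\balpha-\bX\bb\bX^\intercal)^{-1}+(T-t)\,\bM_1^{22}$; inverting and negating then recovers \eqref{eq:solution_H_22}. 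The main obstacle --- and the only place where the hypothesis that $\balpha-\tfrac12\bX\bb\bX^\intercal$ is positive definite is essential --- is justifying that this substitution is legitimate throughout $[0,T]$, i.e. that $\bH^{22}$ never becomes singular (a matrix Riccati equation can otherwise exhibit finite escape time). I would establish this by noting that positive definiteness of $2\balpha-\bX\bb\bX^\intercal$ makes $\boldsymbol\Pi(T)=(2\balpha-\bX\bb\bX^\intercal)^{-1}\succ\bzero$, and since $\ba\succ\bzero$ forces $\bM_1^{22}\succeq\bzero$, the affine map keeps $\boldsymbol\Pi(t)$ positive definite --- hence invertible --- for every $t\in[0,T]$. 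Consequently $\bH^{22}=-\boldsymbol\Pi^{-1}$ is well defined, continuous, and therefore bounded on $[0,T]$, which completes the argument and simultaneously exhibits the block-diagonal solution claimed.
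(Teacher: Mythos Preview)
Your proposal is correct and follows essentially the same route as the paper: write $\bH$ in block form, observe that the off-diagonal block satisfies a linear homogeneous ODE with zero terminal condition and hence vanishes identically, then solve the two decoupled diagonal blocks separately. The paper simply asserts that \eqref{eq:solution_H_22} is ``straightforward to verify,'' whereas you spell out the inversion trick $\boldsymbol\Pi=-(\bH^{22})^{-1}$ and, importantly, explain where the hypothesis that $\balpha-\tfrac12\,\bX\,\bb\,\bX^\intercal\succ\bzero$ actually enters (to keep $\boldsymbol\Pi$ positive definite and hence invertible on all of $[0,T]$, ruling out finite escape time); the paper does not make this point explicit.
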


\begin{proof}
	
	First, write $\bH$ in block form: $
	\bH(t) =  \left[\begin{smallmatrix} \bH^{11}(t) & \bH^{12}(t) \\ \bH^{21}(t) & \bH^{22}(t) \end{smallmatrix}\right] .
	$
	From (\ref{eq:bound_for_matrix_riccati}), it is clear that $\bH^{12} = (\bH^{21})^\intercal$.
	Moreover, $\bH^{11}$, $\bH^{12}$ and $\bH^{22}$ satisfy
	\begin{subequations}
		\begin{align}
		\dot{\bH^{11}} + \tfrac{1}{2} \bH^{12} \ba^{-1} \bH^{21} - \bkappa \bH^{11} - \bH^{11} \bkappa^\intercal + \gamma^{max} \bI_n & = \bzero \,, \label{eq:H_11} \\
		\dot{\bH^{12}} + \tfrac{1}{2} \bH^{12} \ba^{-1} \bH^{22} & = \bzero\,, \label{eq:H_12} \\
		\dot{\bH^{22}} + \tfrac{1}{2} \bH^{22} \ba^{-1} \bH^{22} & = \bzero\,. \label{eq:H_22}
		\end{align}
	\end{subequations}
	It is straightforward to verify that (\ref{eq:solution_H_22}) is a solution to (\ref{eq:H_22}).
	From (\ref{eq:H_12}) and the terminal condition, we have $\bH^{12} (t) = 0$ for all $t \leq T$.
	Moreover (\ref{eq:H_11}) becomes
	\[
	\dot{\bH^{11}} - \bkappa \bH^{11} - \bH^{11} \bkappa^\intercal + \gamma^{max} \bI_n = \bzero\,,
	\]
	whose solution is given by (\ref{eq:solution_H_11}).
	\qed
\end{proof}

We now state the proof of \textit{Theorem \ref{thm:bounded}}.
\begin{proof}
	
	Theorem \ref{thm:solution_riccati_bound} asserts that (\ref{eq:bound_for_matrix_riccati}) has a bounded solution on $[0,T]$,
	therefore, by applying Theorem \ref{thm:comparison_riccati_equation}, it suffices to show that $\tilde{\bM_3} \geq \bM_3$.
	
	To complete this last step, we decompose $(\tilde{\bM_3} - \bM_3)$ as
	\begin{equation*}
		\tilde{\bM_3} - \bM_3  =
\left[\begin{smallmatrix} \gamma^{max} \bI_n & \bkappa \, \bX ^\intercal \\
			\bX \, \bkappa^\intercal & 2\phi \,\tilde{\bSigma} \end{smallmatrix} \right]
=
\underbrace{
\left[\begin{smallmatrix} \gamma^{max} \bI_n - \bGamma & \;\;\bzero \\
				\bzero & \;\;\bzero \end{smallmatrix}\right]
\,}_{(\mathfrak{A})} +
\underbrace{\left[ \begin{smallmatrix} \bGamma & \bkappa\, \bX ^\intercal \\
				\bX \, \bkappa^\intercal & 2\phi \,\tilde{\bSigma} \end{smallmatrix} \right]\,}_{(\mathfrak{B})},
	\end{equation*}
	where $\bGamma = \frac{1}{2\phi} \,\bkappa \,\bX^\intercal \,\tilde{\bSigma}^{-1} \,\bX \,\bkappa^\intercal$.
	Recall that $\gamma^{max}$ is the largest eigenvalue of $\bGamma$, hence $(\mathfrak{A})$ is positive semidefinite.
	It remains to prove that $(\mathfrak{B})$ is positive semidefinite as well.
	
	For any $\bw \in \mathbb{R}^{n + m}$, write $\bw = [\bw_1^\intercal, \bw_2^\intercal]^\intercal$
	where $\bw_1 \in \mathbb{R}^{n}$ and $\bw_2 \in \mathbb{R}^{m}$, then
	we have
	\begin{align*}
	& \bw^\intercal \left[ \begin{smallmatrix} \bGamma & \bkappa\, \bX ^\intercal \\ \bX \, \bkappa^\intercal & 2\phi \,\tilde{\bSigma} \end{smallmatrix} \right] \bw
	\\
	& \quad= \bw_1^\intercal \bGamma \bw_1 + 2 \bw_2^\intercal \bX \bkappa^\intercal \bw_1 + 2 \phi \bw_2^\intercal \tilde{\bSigma} \bw_2
	\\
	&\quad = \bw_1^\intercal \,\bGamma \,\bw_1 + 2 \left(\sqrt{2\phi} \,\tilde{\bsigma}\, \bw_2\right)^\intercal \left(\frac{(\tilde{\bsigma}^{-1})^\intercal}{\sqrt{2\phi}} \bX \,\bkappa^\intercal \,\bw_1\right) +  \left(\sqrt{2\phi} \,\tilde{\bsigma} \,\bw_2\right)^\intercal \left(\sqrt{2\phi} \,\tilde{\bsigma} \,\bw_2\right)
	\\
	& \quad	= \left(\frac{(\tilde{\bsigma}^{-1})^\intercal}{\sqrt{2\phi}} \,\bX\, \bkappa^\intercal \bw_1 + \sqrt{2\phi} \,\tilde{\bsigma} \,\bw_2\right)^\intercal \left(\frac{(\tilde{\bsigma}^{-1})^\intercal}{\sqrt{2\phi}} \,\bX \,\bkappa^\intercal \,\bw_1 + \sqrt{2\phi} \,\tilde{\bsigma} \,\bw_2\right)
	\\
	& \quad	 \geq  0\,.
	\end{align*}
	This implies that $(\mathfrak{B})$ is positive semidefinite and by the comparison principle of Theorem \ref{thm:comparison_riccati_equation}, the proof is complete.
	\qed
\end{proof}


\subsection{Proof of Theorem \ref{thm:feynman-kac}}

To prove the result, we need to show that \eqref{eq:feynman-kac_D} is the unique solution to \eqref{eqn:PDE for D}.  To do this we introduce a sequence of approximating functions that converge to the stated solution.

Let $\Pi=\{t=t_0,t_1,\dots,t_{n_{\Pi}}=T\}$ be a partition of $[0,T]$, let $|\Pi|$ denote the cardinality of the partition $\Pi$, and let $\Delta t_k=(t_k-t_{k-1})$. Next, introduce the following piecewise (left continuous with right limits) constant approximation of $\bwC(t)\triangleq\bC^\intercal(t)\,\ba^{-1}$,
\[
\bwC^{\Pi}(t):= \sum_{k=1}^{|\Pi|} \bC^\intercal(t_{k})\, \ba^{-1} \, \mathds 1_{\{t\in(t_{k-1}, t_k]\}}\,.
\]
The time-ordered exponential of $\bwC^{\Pi}(t)$ is given by
\begin{equation}
\toe{\int_t^u \bwC^\Pi(s)\,ds} \;\; = \; e^{\bwC^{\Pi}(t_{k}) (t_{k}-t)} \left[\;\prod_{j=k+1}^{l} e^{\bwC^{\Pi}(t_{j}) \Delta t_{j}}\;\right] e^{\bwC^{\Pi}(t_{l+1}) (u-t_{l})}, \label{eqn: toe approx C}
\end{equation}
$\forall t\in[t_{k-1},t_k]$, and $u\in[t_{l},t_{l+1}]$, $l<|\Pi|$. Note that this is continuous in both $t$ and $u$ for all $t<u\in[0,T]$. We next define a sequence of functions
\begin{equation}
\bD^{\Pi}(t, \bmu) = \,  \mathbb{E}_{t,\bmu}\left[\int_t^T \toe{\int_t^u \bwC^\Pi(s)\,ds} \; \mathfrak{Z}_u \; du\right] ,
\end{equation}
where we have introduced the process $\mathfrak{Z}=(\mathfrak{Z}_t)_{t\in[0,T]}$ and
\[
\bmZ_t= \bzeta(t,\bmu_t) \quad \text{where} \quad \bzeta(t,\bmu) = 2\,\phi\;\tilde{\bSigma} \,\bmQ_t + \;\bX\;\overline{\bb}\;\bmu\,.
\]
We require the following proposition to proceed.
\begin{proposition}\textbf{PDE for approximating functions.}
The function $\bD^{\Pi}(t, \bmu)$ is the unique solution to the vector-valued PDE
\begin{equation}
\dot\bD^\Pi + \mathcal{L}^\bmu\bD^\Pi + \bwC^\Pi\,\bD^\Pi + \bzeta(t,\bmu) = \bzero^{(m)} \label{eqn:PDE for approx D}
\end{equation}
with terminal condition $\bD^\Pi(T,\bmu)=\bzero^{(m)}$.
\end{proposition}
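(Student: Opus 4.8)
The plan is to read \eqref{eqn:PDE for approx D} as a matrix-valued linear Feynman--Kac problem, exploiting that the discounting kernel $\bwC^\Pi$ is \emph{deterministic}. The terminal condition $\bD^\Pi(T,\bmu)=\bzero^{(m)}$ is immediate since the integral over $[T,T]$ is empty. For the PDE itself, the central observation is that $\boldsymbol\Phi(t,u):=\toe{\int_t^u \bwC^\Pi(s)\,ds}$ is the propagator of the deterministic linear system generated by $\bwC^\Pi$: from \eqref{eqn: toe approx C} it is continuous in $(t,u)$, satisfies the cocycle identity $\boldsymbol\Phi(t,u)=\boldsymbol\Phi(t,s)\,\boldsymbol\Phi(s,u)$ for $t\le s\le u$, and obeys $\partial_u\boldsymbol\Phi(t,u)=\boldsymbol\Phi(t,u)\,\bwC^\Pi(u)$ and $\partial_t\boldsymbol\Phi(t,u)=-\bwC^\Pi(t)\,\boldsymbol\Phi(t,u)$, with $\boldsymbol\Phi(t,t)=\bI$. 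Consequently $\boldsymbol\Phi(t,u)=\boldsymbol\Psi(t)^{-1}\boldsymbol\Psi(u)$, where $\boldsymbol\Psi$ is the deterministic fundamental matrix solving $\dot{\boldsymbol\Psi}=\boldsymbol\Psi\,\bwC^\Pi$ with $\boldsymbol\Psi(0)=\bI$.

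First I would use this factorisation to strip the start-time dependence out of the kernel. Since $\boldsymbol\Psi(t)^{-1}$ is deterministic it pulls outside the expectation, giving $\bD^\Pi(t,\bmu)=\boldsymbol\Psi(t)^{-1}\,\bG(t,\bmu)$ with $\bG(t,\bmu):=\mathbb{E}_{t,\bmu}\!\left[\int_t^T \boldsymbol\Psi(u)\,\bmZ_u\,du\right]$ and $\bmZ_u=\bzeta(u,\bmu_u)$. The integrand $\boldsymbol\Psi(u)\,\bzeta(u,\bmu_u)$ is a function of $(u,\bmu_u)$ alone, so $\bG$ is a standard Feynman--Kac functional of the Markov process $\bmu$: under the moment hypotheses of Theorem~\ref{thm:feynman-kac} the process $s\mapsto \int_0^s \boldsymbol\Psi(u)\,\bzeta(u,\bmu_u)\,du+\bG(s,\bmu_s)$ is a martingale, whence $\bG$ is the unique suitably-growing $C^{1,2}$ solution of $\partial_t\bG+\mathcal{L}^\bmu\bG+\boldsymbol\Psi(t)\,\bzeta(t,\bmu)=\bzero^{(m)}$ with $\bG(T,\cdot)=\bzero^{(m)}$. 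Substituting $\bD^\Pi=\boldsymbol\Psi^{-1}\bG$ back, using $\partial_t[\boldsymbol\Psi(t)^{-1}]=-\bwC^\Pi(t)\,\boldsymbol\Psi(t)^{-1}$ and the fact that $\mathcal{L}^\bmu$ commutes with left multiplication by the deterministic matrix $\boldsymbol\Psi(t)^{-1}$, the terms regroup as $\dot\bD^\Pi+\mathcal{L}^\bmu\bD^\Pi+\bwC^\Pi\,\bD^\Pi=\boldsymbol\Psi(t)^{-1}\big(\partial_t\bG+\mathcal{L}^\bmu\bG\big)=-\boldsymbol\Psi(t)^{-1}\boldsymbol\Psi(t)\,\bzeta(t,\bmu)=-\bzeta(t,\bmu)$, which is exactly \eqref{eqn:PDE for approx D}.

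For uniqueness I would run the argument in reverse directly on \eqref{eqn:PDE for approx D}. Given any $C^{1,2}$ solution $\tilde\bD^\Pi$ with quadratic growth in $\bmu$, apply It\^o's formula to $s\mapsto \boldsymbol\Phi(t,s)\,\tilde\bD^\Pi(s,\bmu_s)$ on $[t,T]$; using $\partial_s\boldsymbol\Phi(t,s)=\boldsymbol\Phi(t,s)\,\bwC^\Pi(s)$ together with the PDE $(\partial_s+\mathcal{L}^\bmu)\tilde\bD^\Pi=-\bwC^\Pi\tilde\bD^\Pi-\bzeta$, the finite-variation part collapses to $-\boldsymbol\Phi(t,s)\,\bzeta(s,\bmu_s)\,ds$. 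Taking $\mathbb{E}_{t,\bmu}[\,\cdot\,]$, and invoking $\boldsymbol\Phi(t,t)=\bI$ and $\tilde\bD^\Pi(T,\cdot)=\bzero^{(m)}$, yields $\tilde\bD^\Pi(t,\bmu)=\mathbb{E}_{t,\bmu}\!\left[\int_t^T\boldsymbol\Phi(t,s)\,\bmZ_s\,ds\right]=\bD^\Pi(t,\bmu)$.

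The main obstacle will be integrability: since $\bzeta(t,\bmu)=2\,\phi\,\tilde{\bSigma}\,\bmQ_t+\bX\,\overline{\bb}\,\bmu$ is affine in $\bmu$, one must verify that the stochastic integral appearing in the It\^o expansions is a genuine martingale (so that its expectation vanishes) and that $\bG$ and $\bD^\Pi$ are finite with quadratic growth. This is precisely where the hypotheses $\mathbb{E}[\,|\bmu_0^\pm|^2\,]<\infty$ and $\mathbb{E}_{0,\bmu}[\,|\bmu_t^\pm|^2\,]\le C(1+|\bmu^\pm|^2)$ of Theorem~\ref{thm:feynman-kac} enter: combined with the boundedness of $\boldsymbol\Phi$ on the compact interval $[0,T]$ (from the continuity established in \eqref{eqn: toe approx C}), they control the relevant second moments uniformly in $t$ and license the exchange of expectation and differentiation used throughout.
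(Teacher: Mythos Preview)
Your proof is correct and follows essentially the same route as the paper. Both arguments use the deterministic propagator $\boldsymbol\Phi(t,u)=\toe{\int_t^u\bwC^\Pi(s)\,ds}$ as an integrating factor to reduce the matrix-weighted linear PDE to an ordinary Feynman--Kac martingale problem for the Markov process $\bmu$: the paper does this by showing that $s\mapsto\boldsymbol\Psi(s)\,\bD^\Pi(s,\bmu_s)+\int_0^s\boldsymbol\Psi(u)\,\bmZ_u\,du$ is a martingale and then applying Dynkin's formula, which is exactly your process $s\mapsto\bG(s,\bmu_s)+\int_0^s\boldsymbol\Psi(u)\,\bmZ_u\,du$ in slightly different notation. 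Your explicit factorisation $\boldsymbol\Phi(t,u)=\boldsymbol\Psi(t)^{-1}\boldsymbol\Psi(u)$ and the separate uniqueness paragraph via It\^o's formula on $\boldsymbol\Phi(t,s)\,\tilde\bD^\Pi(s,\bmu_s)$ make the structure slightly more transparent than the paper's presentation, but the underlying idea is identical.
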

\begin{proof}
To show this, define the stochastic process $\bmD^{\Pi}=(\bmD^{\Pi}_t)_{t\in[0,T]}$, where
\[
\bmD^{\Pi}_t = \; \toe{\int_0^t \bwC^\Pi(s)\,ds} \;\bD^{\Pi}(t, \bmu_t) + \int_0^t \toe{\int_0^u \bwC^\Pi(s)\,ds} \; \bmZ_u \; du\,.
\]
Due the Markov property of $\bmu$, we see that
\[
\bmD^{\Pi}_t =\,  \mathbb{E}\left[\left.\int_0^T \toe{\int_0^u \bwC^\Pi(s)\,ds} \; \bmZ_u \; du \; \right| \;\mathcal F^{\bmu}_t\right].
\]
By the integrability assumptions on the process $\bmu$, this is a strict martingale.
Moreover, the Markov property implies the existence of a sequence of functions $\bmd^\Pi:\mathds{R}_+\times\mathds{R}\mapsto \mathds{R} $ such that $\bmD_t^\Pi = \bmd^\Pi(t,\bmu_t)$. For any $\mathcal F^{\bmu}$-stopping time $\tau\le T$, by Dynkin's formula we have
\begin{align*}
\bzero^{(m)} =&\; \mathbb{E}[\left. \bD^{\Pi}_\tau -\bD^{\Pi}_t\;\right|\;\mathcal F^\bmu_t] \\
=&\; \mathbb{E}\left[\left. \int_t^\tau \left\{ \partial_t \bmd^\Pi(u,\bmu_u) + \mathcal L^\bmu \bmd^\Pi(u,\bmu_u) \right\}du\;\right|\;\mathcal F^\bmu_t\right]\,.
\end{align*}
Taking $\tau= (T-t) \wedge h \wedge \inf\{s \ge 0 \;:\; |\bmu_{t+s}-\bmu_t| \ge \epsilon \}$, for $h$ small, then
\begin{align}
\bzero^{(m)} =\; \mathbb{E}\left[\left. \frac{1}{h}\int_t^\tau \left\{ \partial_t \bmd^\Pi(u,\bmu_u) + \mathcal L^\bmu \bmd^\Pi(u,\bmu_u) \right\}du\;\right|\;\mathcal F^\bmu_t\right]\,.
\end{align}
As $h\downarrow0$, $\mathbb P(\tau \ne h) \downarrow 0 $, thus taking the limit as $h\downarrow0$, and using the fundamental theorem of calculus, we have
\begin{equation}
\partial_t \bmd^\Pi(t,\bmu_t) + \mathcal L^\bmu \bmd^\Pi(t,\bmu_t) = \bzero^{(m)} \,. \label{eqn: PDE bm}
\end{equation}
Furthermore, from \eqref{eqn: toe approx C},
\[
\partial_t \bmd^\Pi(t,\bmu_t) =\; \toe{\int_0^t \bwC^\Pi(s)\,ds}\;\left\{\bwC^\Pi(t)\;\bD^\Pi(t,\bmu_t) + \partial_t \bD^\Pi(t,\bmu_t) +
\bzeta(t,\bmu_t)\right\}
\]
and $\mathcal L^\bmu \bmd^\Pi(t,\bmu_t) = \; \toe{\int_0^t \bwC^\Pi(s)\,ds}\;\mathcal L^\bmu \bD^\Pi(t,\bmu_t)$, hence, as \eqref{eqn: PDE bm} holds for all paths of $\bmu$,  together with these two equalities, \eqref{eqn: PDE bm} reduces to \eqref{eqn:PDE for approx D}.
\qed
\end{proof}

Now, define the approximation error $\mathfrak{E}^{\Pi}(t,\bmu) \triangleq \bD^{\Pi}(t, \bmu)-\bD(t, \bmu)$. Taking the difference between \eqref{eqn:PDE for D} and \eqref{eqn:PDE for approx D}, we see that $\mathfrak{E}^{\Pi}$ satisfies the linear PDE
\[
\left(\partial_t + \mathcal L^\bmu\right)\mathfrak{E}^{\Pi}(t,\bmu) + \bwC^\Pi(t) \,\mathfrak{E}^{\Pi}(t,\bmu) + \left(\bwC^\Pi(t)- \bwC(t)\right)\,\bD(t,\bmu) = \bzero^{(m)}\,,
\]
with terminal condition $\mathfrak{E}^\Pi(T,\bmu)=\bzero^{(m)}$. Applying the same argument as above, $\mathfrak{E}^{\Pi}$
admits the representation
\begin{equation}
\mathfrak{E}^\Pi(t, \bmu) =\,  \mathbb{E}_{t,\bmu}\left[\int_t^T \toe{\int_t^u \bwC^\Pi(s)\,ds} \; \left(\bwC^\Pi(s)- \bwC(s)\right)\,\bD(s,\bmu_s) \; du\right]\,.
\end{equation}
It remains to show $ \mathfrak{E}^\Pi(t, \bmu) \xrightarrow{\Pi \downarrow 0}\bzero $.
By Theorem \ref{thm:bounded}, $ \bC $ is bounded and continuous on $ [0, T] $.
Therefore, by construction, $ \tilde{\bC} $, $ \tilde{\bC}^\Pi $ and $ \toe{\int_t^u \bwC^\Pi(s)\,ds} $ are all bounded and
we have $ \tilde{\bC}^\Pi \xrightarrow{\Pi \downarrow 0 } \tilde{\bC}$.
By the assumptions, there exists a constant $ C_2 > 0 $ such that
\begin{equation*}
	|D(t,\bmu)| \leq C_2 (1 + |\bmu|^2)
\end{equation*}
for all $t\in[0,T]$.
The assumptions on $ \bmu^\pm $ imply that $ \bmu $ has a finite $ \mathbb{L}^2(\Omega\times[0,T)) $-norm.
Hence $ \bD := \{\bD(t,\bmu_t)\}_{0\leq t\leq T} $ has a finite $ \mathbb{L}^1(\Omega\times[0,T)) $-norm.
The desired result follows from dominated convergence. \qed


\subsection{Proof of Theorem \ref{thm:verification1}}
\label{sec:proof_of_verification_theorem}
\begin{proof}
	Under the stated assumptions, the candidate solution is indeed a classical solution of the DPE. Applying standard results (e.g., \cite{oksendal2005applied}),
	it suffices to check that (i) the SDE for $\bQ^{\bnu^*}$  has a unique solution for each given initial data;
	and (ii) $\bnu_t^*$ is indeed an admissible control.
	
	To verify (i), substituting the optimal control (\ref{eq:optimal_control})
	into the dynamics of (\ref{eq:inventory_process}),
	we have the dynamics for $\bQ_t^{\bnu^*}$
	\begin{eqnarray*}
		d\bQ_t^{\bnu^*} & = & -\tfrac{1}{2}\ba^{-1} \left( 2\,\bC(t)\,\bQ_t^{\bnu^*}+
		\left(\bE^{\intercal}(t) - \bX \right)\,\bZ_t + \bD(t,\bmu_t) \right) \,dt \,.
	\end{eqnarray*}
The above equation is an ODE with stochastic source term, and it can be explicitly integrated to find
\begin{equation}
\label{eq:optimal inventory}
\begin{split}
\bQ_t^{\bnu^*} =& \; \boldsymbol{:}\be^{-\int_0^t \ba^{-1} \bC(s)\,ds}\boldsymbol{:}\;\bQ_0 \\
& \; - \int_0^t \boldsymbol{:}\be^{-\int_u^t \ba^{-1} \bC(s)\,ds}\boldsymbol{:} \Big\{
\left(\bE^{\intercal}(u) - \bX \right)\,\bZ_u + \bD(u,\bmu_u)\Big\}\,du\,.
\end{split}
\end{equation}
Therefore, $ \bQ^{\bnu^*} $ has a unique solution for any initial data.

	To verify (ii), it suffices to show that $\bnu_t^*$ has a finite $\mathbb{L}^2(\Omega \times [0,T))$-norm.
	From (\ref{eq:optimal_control}), it suffices to show that each of $ \bZ $, $ \bQ^{\bnu^*} $ and $ \bD := \{\bD(t,\bmu_t) \}_{0\leq t\leq T} $ has a finite $\mathbb{L}^2(\Omega \times [0,T))$-norm.
		From the SDE (\ref{eq:price_dynamics}) we see that $ \bZ $ satisfies this condition.
		Moreover, from (\ref{eq:optimal inventory}) and Theorem 2 which implies that $\bC$ and $\bE$ are bounded on $[0,T]$, $\bQ $ has a finite $\mathbb{L}^2(\Omega \times [0,T))$-norm if $ \bD $ does.
	
It remains to show that $ \bD $ has a finite $\mathbb{L}^2(\Omega \times [0,T))$-norm.
From (\ref{eq:feynman-kac_D}) and the assumptions in Theorem 2, there exists a constant $ C_2 > 0 $ such that
\begin{equation*}
	|\bD(t, \bmu)| \leq C_2(1 + |\bmu|)\,,
\end{equation*}
for all $ 0 \leq t\leq T $ and $ \bmu\in \mathbb{R}^n $.
Furthermore, because the assumptions imply that $ \bmu $ has a finite $\mathbb{L}^2(\Omega \times [0,T))$-norm,
$ \bD $ also has a finite $\mathbb{L}^2(\Omega \times [0,T))$-norm, and the desired result follows. \qed
\end{proof}

\subsection{Calculating the coefficients for the limiting case}
\label{sec:detail_asy_calculation}
Substituting the power series representation (\ref{eq:asy_expansion}) into the matrix differential equations \eqref{eq:matrix_riccati_block_form},
we obtain the following equations
{\small
\begin{subequations}
\begin{align}
\bzero &= - \sum_{n=0}^\infty (n+1)\, \mathscr A_{n+1} \,\tau^n - \sum_{n=1}^\infty \left[ \bkappa\, \mathscr A_n + \mathscr A_n \,\bkappa^\intercal \right] \,\tau^n
+ \left[\tfrac{1}{4} \sum_{n=1}^\infty \mathscr E_n \, \tau^n\right]
			\ba^{-1} \left[ \sum_{n=1}^\infty \mathscr E_n^\intercal \, \tau^n \right], \label{eq:asy_equation_A}
\\[0.25em]
\bzero &= \frac{\mathscr C_{-1} }{\tau^2} - \sum_{n=0}^{\infty} (n+1)\, \mathscr C_{n+1} \,\tau^n - \phi \,\tilde{\bSigma}
+ \left[\frac{\mathscr C_{-1}^\intercal }{\tau} + \sum_{n=0}^{\infty} \,{\mathscr C_n}^\intercal\, \tau^n \right] \ba^{-1} \left[ \frac{\mathscr C_{-1} }{\tau} + \sum_{n=0}^{\infty} \,\mathscr C_n\, \tau^n \right],  \label{eq:asy_equation_C}
\\[0.25em]
\bzero &= - \sum_{n=0}^\infty \left(n+1\right)\,\mathscr E_{n+1} \,\tau^n - \sum_{n=0}^\infty \bkappa \,\mathscr E_n \,\tau^n
		    +    \left[\sum_{n=1}^\infty \mathscr E_n  \,\tau^n\right]   \ba^{-1} \left( \frac{\mathscr C_{-1} }{\tau} + \sum_{n=0}^{\infty} \left(\,\mathscr C_n \right) \, \tau^n \right).  \label{eq:asy_equation_E}%
\end{align}
\label{eq:asy_matrix_equation}
\end{subequations}}

Matching the constant terms in (\ref{eq:asy_equation_A}), we have $\mathscr A_1 = 0$.
Matching the coefficients for $\tau^{-2}$ in (\ref{eq:asy_equation_C}), we have $\mathscr C  = -\ba$.
Matching the coefficients for $\tau^{-1}$ in (\ref{eq:asy_equation_C}) yields the following equality
\begin{equation*}
	\mathscr C_{-1} \,\ba^{-1} \, \mathscr C_0 + \mathscr C_0 \, \ba^{-1} \,\mathscr C_{-1} = \bzero\,.
\end{equation*}
Therefore $\mathscr C_0 = \bzero$.

Finally, by matching the constant terms in (\ref{eq:asy_equation_E})
\[
	- \mathscr E_1 - \bkappa\, \bX^\intercal + \mathscr E_1 \,\ba^{-1}\, \mathscr C_{-1} = 0\,.
\]
This implies $\mathscr E_1 = -\frac{1}{2} \, \bkappa\, \bX^\intercal$.

It remains to show that $ \bD(t,\bmu) $ admits the asymptotic representation $ \mathcal{O} (\tau) \bmu + \mathcal{O}(\tau) $.
	From the assumptions in Theorem \ref{thm:feynman-kac}, we have $ \bE_{0,\bmu}\left[ |\bmu_t| \right] < C\,\left( 1 + |\bmu| \right) $ for all $ 0\leq t\leq T $ and some constant $ C > 0 $.
	As we assume that $ \bmu $ is Markov, we also have
	\begin{equation*}
		\bE_{t,\bmu}\left[ |\bmu_u| \right] < C\,\left( 1 + |\bmu| \right)\,,
	\end{equation*}
	for $ 0\leq t\leq u\leq T $.
	The above bound, together with \eqref{eq:feynman-kac_D}, yields
	\begin{equation*}
		|D(t,\bmu)| \leq \int_{t}^{T} C_2 + C_3\, |\bmu| \, du\,,
	\end{equation*}
	for constants $ C_2, C_3 > 0 $.
	The desired result follows.

\qed

\clearpage

\section{Parameter Estimates}

In this appendix we collect the various parameter estimates from the five Nasdaq traded stocks INTC, SMH, FARO, NTAP and ORCL.
\begin{table}[h!]
\centering
\footnotesize
\begin{tabular}{cccccc}
 & INTC & SMH & FARO & NTAP & ORCL \\ \hline \\
 $\hat \theta$ &34.233 & 51.720 & 56.338 & 43.179 & 38.885 \\ \hline \\
Co-int factor &-0.904 &  0.763 &  0.048 & -0.164 &  0.931 \\ \hline \\
    $\hat{\ba}$ & $0.44 \times 10^{-6}$ & $0.71 \times 10^{-6}$ & $0.32 \times 10^{-3}$ & $3.05 \times 10^{-6}$ & $1.35 \times 10^{-6}$ \\
    & ($2.37 \times 10^{-7}$) & ($2.58 \times 10^{-7}$) & ($1.62 \times 10^{-4}$) & ($1.27 \times 10^{-6}$) & ($0.56 \times 10^{-6}$) \\ \hline \\
    $\hat\lambda^-$ & $453.91$ & $59.4$ & $21.88$ & $251.87$ & $304.13$ \\
          & ($264.63$) & ($49.46$) & ($9.25$) & ($102.72$) & ($146.83$) \\
    $\mathbb E[\eta^-]$ & $1013.83$ & $380.32$ & $98.58$ & $270.8$ & $505.59$ \\
          & ($306.58$) & ($121.39$) & ($15.78$) & ($55.2$) & ($100.29$) \bigstrut[b]\\
    \hline\end{tabular}
\caption{The first two rows (data November 3, 2014) show  mean-reverting level $\theta$ (in dollars) and weights of the co-integrating factor. The rest of table employs data for the entire year 2014. Row 3 shows the estimates of temporary price impact. We assume no cross effects so only provide the diagonal  elements of the matrix $\ba$, and also assume no permanent impact. Row 4 shows the standard deviation of the estimates in row 3.  The bottom 4 rows show the average incoming rates of MOs and their average volume: $\lambda^-$ is the average number of sell MO per hour over the year 2014, $\mathbb E[\eta^-]$ is the average volume of MOs. The standard deviation of the estimate
is shown in parentheses.}
\label{table: theta coint fact a b}
\end{table}


\begin{table}[h!]
\centering
\footnotesize
\begin{tabular}{r|rrrrr}
\hline
\hline
      & INTC  & SMH   & FARO  & NTAP  & ORCL  \bigstrut\\
\hline
INTC  & 45.66 & -38.51 & -2.43 & 8.26  & -47.01 \bigstrut[t]\\
	     & (10.70) & (8.99) & (0.57) & (1.93) & (11.02) \\
SMH   & -19.83 & 16.73 & 1.06  & -3.59 & 20.42 \\
	     & (13.42) & (11.27) & (0.72) & (2.41) & (13.82) \\
FARO  & -41.34 & 34.87 & 2.20  & -7.48 & 42.57 \\
	     & (51.50) & (43.27) & (2.75) & (9.27) & (53.05) \\
NTAP  & 4.98  & -4.20 & -0.27 & 0.90  & -5.13 \\
	     & (12.17) & (10.22) & (0.65) & (2.19) & (12.54) \\
ORCL  & -6.47 & 5.45  & 0.34  & -1.17 & 6.66 \\
	     & (6.30) & (5.29) & (0.34) & (1.14) & (6.49) \bigstrut[b]\\
\hline
\hline
\end{tabular}%

\caption{Estimated mean-reverting matrix $\bkappa$ and t statistics. }
\label{table:est_kappa_5S60S}
\end{table}

\begin{table}[h!]
\centering
\footnotesize
\begin{tabular}{r|rrrrr}
\hline
\hline
      & INTC  & SMH   & FARO  & NTAP  & ORCL  \bigstrut\\
\hline
INTC  & 0.124 & 0.108 & -0.040 & 0.027 & 0.019 \bigstrut[t]\\
SMH   & 0.108 & 0.194 & 0.060 & 0.060 & 0.027 \\
FARO  & -0.040 & 0.060 & 2.855 & 0.058 & 0.001 \\
NTAP  & 0.027 & 0.060 & 0.058 & 0.159 & 0.022 \\
ORCL  & 0.020 & 0.027 & 0.001 & 0.022 & 0.043 \bigstrut[b]\\
\hline
\hline
\end{tabular}%

\caption{Estimated covariance matrix $\Sigma$.}
\label{table:est_Sigma_5S60S}
\end{table}

\begin{table}[h!]
\centering \footnotesize
\begin{tabular}{r|rr}
\hline
\hline
      & INTC  & SMH  \bigstrut\\
\hline
INTC  & 0.131 & 0.105 \bigstrut[t]\\
SMH   & 0.105 & 0.195 \bigstrut[b]\\
\hline
\hline
\end{tabular}%

\caption{Estimated covariance matrix $\bSigma^{AC}$.}
\label{table:est_Sigma_AC}
\end{table}

\begin{table}[h!]
	\centering
	\footnotesize
	\begin{tabular}{r|rrrrr}
		\hline
		\hline
		& INTC  & SMH   & FARO  & NTAP  & ORCL  \bigstrut\\
		\hline		
		INTC & 73.92 & -62.79 & -3.50 & 19.57 & -77.42 \bigstrut[t] \\
		& (10.99) & (9.34) & (0.52) & (2.93) & (11.50) \\
		SMH & 8.91 & -7.57 & -0.42 & 2.36 & -9.33 \\
		& (14.25) & (12.11) & (0.67) & (3.79) & (14.93) \\
		FARO & 48.73 & -41.39 & -2.31 & 12.90 & -51.04 \\
		& (50.80) & (43.18) & (2.40) & (13.53) & (53.20) \\
		NTAP & 11.48 & -9.75 & -0.54 & 3.04 & -12.02 \\
		& (12.25) & (10.41) & (0.58) & (3.26) & (12.83) \\
		ORCL & -15.83 & 13.45 & 0.75 & -4.19 & 16.59 \\
		& (6.39) & (5.44) & (0.30) & (1.70) & (6.70) \bigstrut[b] \\
		\hline
		\hline
	\end{tabular}%

	\caption{Estimated (with error) mean-reverting matrix $\bkappa$ and t statistics. }
	\label{table:est_kappa_5S60S_with_error}
\end{table}

\begin{table}[h!]
	\centering
	\footnotesize
	\begin{tabular}{r|rrrrr}
		\hline
		\hline
		& INTC  & SMH   & FARO  & NTAP  & ORCL  \bigstrut\\
		\hline
		INTC & 0.155 & 0.155 & -0.032 & 0.053 & 0.032 \bigstrut[t] \\
		SMH  & 0.155 & 0.260 & 0.061 & 0.084 & 0.033 \\
		FARO & -0.032 & 0.061 & 3.299 & 0.144 & -0.006 \\
		NTAP & 0.053 & 0.084 & 0.144 & 0.192 & 0.021 \\
		ORCL & 0.032 & 0.033 & -0.006 & 0.021 & 0.053 \bigstrut[b] \\
		\hline
		\hline
	\end{tabular}%

	\caption{Estimated (with error) covariance matrix $\Sigma$.}
	\label{table:est_Sigma_5S60S_with_error}
\end{table}

\begin{table}[h!]
	\centering \footnotesize
	\begin{tabular}{r|rr}
		\hline
		\hline
		& INTC  & SMH  \bigstrut\\
		\hline
		INTC  & 0.169 & 0.160 \bigstrut[t]\\
		SMH   & 0.160 & 0.261 \bigstrut[b]\\
		\hline
		\hline
	\end{tabular}%
	
	\caption{Estimated (with error) covariance matrix $\bSigma^{AC}$.}
	\label{table:est_Sigma_AC_with_error}
\end{table}

\clearpage

\section*{References}
\bibliographystyle{chicago}
\bibliography{CointAlgoTrading}

\end{document}